\title{Universal Slope Sets for 1-Bend Planar Drawings}
\author{Patrizio~Angelini$^1$ \and Michael~A.~Bekos$^1$	\and Giuseppe~Liotta$^2$ \and Fabrizio~Montecchiani$^2$
\\
\medskip
\\
$^1$Wilhelm-Schickhard-Institut f\"ur Informatik, Universit\"at T\"ubingen, Germany\\
\texttt{\{angelini,bekos\}@informatik.uni-tuebingen.de}
\\
$^2$Universit\'a degli Studi di Perugia, Italy\\
\texttt{\{giuseppe.liotta, fabrizio.montecchiani\}@unipg.it}
}
\newcommand{\myparagraph}[1]{\medskip\noindent\textbf{#1}}
\newcommand{\skel}[1]{G^\textit{skel}_{#1}}
\newcommand{\reduced}[1]{\overline{G}_{#1}\xspace}
\newcommand{\nice}{stretchable\xspace}
\newtheorem{theorem}{Theorem}
\newtheorem{lemma}{Lemma}
\newcommand{\topfirst}[2]{%
    \IfEqCase{#1}{%
        {a}{\tau_{a}(#2)}%
        {c}{\tau_{c}(#2)}%
        {~}{\tau(#2)}%
    }[\PackageError{firsttop}{Undefined option to firsttop: #1}{}]%
}%
\newcommand{\bottomfirst}[2]{%
    \IfEqCase{#1}{%
        {a}{\beta_{a}(#2)}%
        {c}{\beta_{c}(#2)}%
        {~}{\beta(#2)}%
    }[\PackageError{bottomfirst}{Undefined option to bottomfirst: #1}{}]%
}
\newcommand{\bottomport}[2]{\beta_{#1}(#2)}
\begin{document}

\maketitle

\begin{abstract}
We describe a set of $\Delta -1$ slopes that are universal for 1-bend planar drawings of planar graphs of maximum degree $\Delta \geq 4$; this establishes a new upper bound of $\Delta-1$ on the 1-bend planar slope number. By universal we mean that every planar graph of degree $\Delta$ has a planar drawing with at most one bend per edge and such that the slopes of the segments forming the edges belong to the given set of slopes. 
This improves over previous results in two ways: Firstly, the best previously known upper bound for the 1-bend planar slope number was $\frac{3}{2} (\Delta -1)$ (the known lower bound being $\frac{3}{4} (\Delta -1)$); 
secondly, all the known algorithms to construct 1-bend planar drawings with $O(\Delta)$ slopes use a different set of slopes for each graph and can have bad angular resolution, while our algorithm uses a universal set of slopes, which also guarantees that the minimum angle between any two edges incident to a vertex is $\frac{\pi}{(\Delta-1)}$.
\end{abstract}

\section{Introduction}

This paper is concerned with planar drawings of graphs such that each edge is a poly-line with few bends, each segment has one of a limited set of possible slopes, and the drawing has good angular resolution, i.e. it forms large angles between consecutive edges incident on a common vertex. Besides their theoretical interest, visualizations with these properties find applications in software engineering and information visualization (see, e.g., \cite{DBLP:books/ph/BattistaETT99,DBLP:books/sp/Juenger04,DBLP:reference/crc/2013gd}). For example, planar graphs of maximum degree four (degree-4 planar graphs) are widely used in database design, where they are typically represented by orthogonal drawings, i.e. crossing-free drawings such that every edge segment is a polygonal chain of horizontal and vertical segments. Clearly, orthogonal drawings of degree-4 planar graphs are optimal both in terms of angular resolution and in terms of number of distinct slopes for the edges.
Also, a classical result in the graph drawing literature is that every degree-4 planar graph, except the octahedron, admits an orthogonal drawing with at most two bends per edge~\cite{DBLP:journals/comgeo/BiedlK98}.

It is immediate to see that more than two slopes are needed in a planar drawing of a graph with vertex degree $\Delta \geq 5$. The {\em k-bend planar slope number} of a graph $G$ with degree $\Delta$ is defined as the minimum number of distinct slopes that are sufficient to compute a crossing-free drawing  of $G$ with at most $k$ bends per edge. 
 Keszegh et al.~\cite{DBLP:journals/siamdm/KeszeghPP13} generalize the aforementioned technique by Biedl and Kant~\cite{DBLP:journals/comgeo/BiedlK98} and prove that for any $\Delta \geq 5$, the 2-bend planar slope number of a degree-$\Delta$ planar graph is $\lceil \Delta/2 \rceil$; the construction in their proof has optimal angular resolution, that is $\frac{2 \pi}{\Delta}$.
 
For the case of drawings with one bend per edge, Keszegh et al.~\cite{DBLP:journals/siamdm/KeszeghPP13} also show an upper bound of $2 \Delta$ and a lower bound of $\frac{3}{4}(\Delta - 1)$ on the 1-bend planar slope number, while a recent paper by Knauer and Walczak~\cite{DBLP:conf/latin/KnauerW16} improves the upper bound to $\frac{3}{2}(\Delta - 1)$. Both these papers use a similar technique: First, the graph is realized as a contact representation with $T$-shapes~\cite{DBLP:journals/cpc/FraysseixMR94}, which is then transformed into a planar drawing where vertices are points and edges are poly-lines with at most one bend. The set of slopes depends on the initial contact representation and may change from graph to graph; also, each slope is either very close to horizontal or very close to vertical, which in general gives rise to bad angular resolution. Note that Knauer and Walczak~\cite{DBLP:conf/latin/KnauerW16} also considered subclasses of planar graphs. In particular, they proved that the 1-bend planar slope number of outerplanar graphs with $\Delta > 2$ is $\lceil \frac{\Delta}{2} \rceil$ and presented an upper bound of $\Delta + 1$ for planar bipartite graphs. 

In this paper, we study the trade-off between number of slopes, angular resolution, and number of bends per edge in a planar drawing of a graph having maximum degree $\Delta$. We improve the upper bound of Knauer and Walczak~\cite{DBLP:conf/latin/KnauerW16} on the 1-bend planar slope number of planar graphs and at the same time we achieve $\Omega(\frac{1}{\Delta})$ angular resolution.  More precisely, we prove the following.

\begin{theorem}\label{th:result}
For any $\Delta \geq 4$, there exists an equispaced universal set $S$ of $\Delta-1$ slopes for 1-bend planar drawings of planar graphs with maximum degree $\Delta$. That is, every such graph has a planar drawing with the following properties: (i) each edge has at most one bend; (ii) each edge segment uses one of the slopes in $S$; and (iii) the minimum angle between any two consecutive edge segments incident on a vertex or a bend is at least $\frac{\pi}{\Delta - 1}$.
\end{theorem}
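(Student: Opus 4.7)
The natural candidate for the universal slope set is $S = \{k\pi/(\Delta-1) : k = 0, 1, \ldots, \Delta-2\}$, that is, the $\Delta-1$ equispaced slopes obtained by partitioning the half-circle evenly. At each vertex, every slope in $S$ contributes two opposite ports, yielding $2(\Delta-1)$ ports in total, which is at least $\Delta$ for $\Delta \geq 4$; hence any injective assignment of the incident edges to distinct ports realises angular resolution at least $\pi/(\Delta-1)$. Condition (iii) thus reduces to a combinatorial port-assignment problem at each vertex, and the remaining task is to realise these choices with a planar one-bend drawing whose segments use slopes in $S$.

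The plan is to proceed by induction over an SPQR-tree decomposition. After augmenting $G$ to a biconnected planar graph of the same maximum degree (by temporarily adding edges or subdividing on the outer face), I would build its SPQR-tree $\mathcal{T}$ and process the nodes bottom-up. The macros $\pert{\mu}$, $\skel{\mu}$, $\redge{\mu}$, and $\reduced{\mu}$ strongly suggest that, for each node $\mu$ with reference edge $\redge{\mu}$ and poles $u,v$, the induction produces a drawing of the pertinent graph $\pert{\mu}$ in which $u$ and $v$ sit at prescribed positions on the boundary of an outer region and all remaining vertices and bends lie strictly inside. Such a drawing then acts, from the outside, like a drawing of the single edge $\redge{\mu}$ and can therefore replace that virtual edge in the skeleton of $\mu$'s parent.

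For the base case (R-nodes), I would rely on a Kant-style canonical ordering $v_1, \ldots, v_n$ of the triconnected skeleton, so that the already-placed neighbours of $v_k$ form a contiguous arc on the current outer face. Placing the vertices on a staircase from bottom to top with sufficient horizontal separation, each edge $(v_k,v_i)$ with $i<k$ receives a slope pair $(s_1,s_2) \in S \times S$; the bend point is then uniquely determined as the intersection of the two corresponding lines through the endpoints. The port budget of the first paragraph guarantees that compatible pairs always exist, and the monotone geometry of the staircase keeps the routing planar. S- and P-nodes are then handled, respectively, by serial and parallel composition of their children's drawings, each scaled and translated to fit the region reserved for the corresponding virtual edge.

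The main obstacle is showing that each recursively built drawing is \nice, in the sense that its outer shape can be matched to the region reserved for it by its parent. Since horizontal translation and uniform scaling are essentially the only affine maps preserving $S$ pointwise, the poles of $\pert{\mu}$ must lie on the boundary of a region whose shape is universal up to such transformations. I expect the crux of the proof to be a rigidity argument showing that, for every R-node skeleton, one can produce a drawing inside a triangular or trapezoidal region with two of its sides along prescribed slopes of $S$ and with controlled port assignments at the two poles; this is the step that must be tuned carefully to accommodate all possible parent contexts at once. Once this stretchability lemma is established, the theorem follows by invoking it at the root of $\mathcal{T}$, undoing the initial biconnectivity augmentation, and checking properties (i)--(iii) directly from the construction.
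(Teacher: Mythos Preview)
Your high-level architecture matches the paper's: canonical ordering for the triconnected skeleton, SPQR-tree recursion for biconnected graphs, and a final reduction from the general case. Where you diverge is in the geometric mechanism that makes the recursion close, and this is a genuine gap rather than a stylistic difference.

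You correctly observe that uniform scaling and translation are the only global affine maps preserving $S$, and you conclude that the recursively built piece must fit inside a triangular or trapezoidal region bounded by lines of slopes in $S$. The paper does something different and more flexible: each pertinent graph (minus its poles) is drawn inside an axis-aligned \emph{rectangle} (a ``chip''), with the pins for the two poles on the left and right vertical sides and every pin incident to a \emph{horizontal} segment. The point is not to find a region shape that is invariant under the allowed affine maps; it is to build in enough horizontal segments that the drawing can be widened \emph{non-affinely} by cutting along a $y$-monotone curve that crosses only horizontal segments and shifting one side. This is exactly the content of Lemma~\ref{le:stretching}, and it rests on the invariant that every edge on the current outer boundary carries a horizontal segment (Invariant~I.\ref{i:stretch} in the triconnected case, M.\ref{m:stretch} in the biconnected case). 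Your staircase description for R-nodes does not build this invariant in, and without it there is no way to absorb the mismatch between a child's width and the slot its parent reserves for it.

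Two smaller points. First, the paper does not handle general graphs by augmenting to biconnected; it uses the BC-tree directly, reserving at each cutvertex a block of consecutive rays equal to its degree in the other blocks (Invariants~K.\ref{it:k1}--K.\ref{it:k2}). Your augmentation idea may be workable, but you would need to argue that the added edges can be removed without destroying the port structure, which is not automatic. Second, in the R-node algorithm the paper distinguishes sharply between chains and singletons of the canonical order; singletons with several lower neighbours require a delicate sequence of horizontal stretches to align the bottom rays of the new vertex with the chosen top rays of its neighbours. A generic ``staircase with enough separation'' does not capture this, and it is precisely here that the horizontal-stretch lemma is used repeatedly.
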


Theorem~\ref{th:result}, in conjuction with~\cite{DBLP:conf/focs/Kant92}, implies that the 1-bend planar slope number of planar graphs with $n \geq 5$ vertices and maximum degree $\Delta \geq 3$ is at most $\Delta - 1$. We prove the theorem by using an approach that is conceptually different from that of Knauer and Walczak~\cite{DBLP:conf/latin/KnauerW16}: We do not construct an intermediate representation and then transform it into a 1-bend planar drawing, but we prove the existence of a \emph{universal} set of slopes and use it to directly compute a 1-bend planar drawing of any graph with degree at most $\Delta$. The universal set of slopes consists of $\Delta-1$ distinct slopes such that the minimum angle between any two of them is $\frac{\pi}{(\Delta-1)}$. An immediate consequence of the $\frac{3}{4}(\Delta - 1)$ lower bound argument in~\cite{DBLP:journals/siamdm/KeszeghPP13} is that a 1-bend planar drawing with the minimum number of slopes cannot have angular resolution larger than $\frac{4}{3}\frac{\pi}{(\Delta - 1)}$. Hence, the angular resolution of our drawings is optimal up to a multiplicative factor of at most $0.75$; also, note that the angular resolution of a graph of degree $\Delta$ is at most $\frac{2\pi}{\Delta}$ even when the number of slopes and the number of bends along the edges are not bounded.

\begin{wrapfigure}{r}{0.35\textwidth} 
\centering
\includegraphics[scale=1]{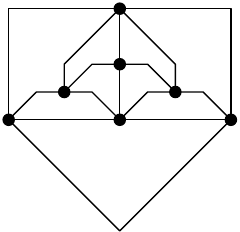}
\caption{A 1-bend planar drawing with 4 slopes and angular resolution $\frac{\pi}{4}$ of a graph with $\Delta=5$.\label{fig:example}}
\end{wrapfigure}
The proof of Theorem~\ref{th:result} is constructive and it gives rise to a linear-time algorithm assuming the real RAM model of computation. Figure~\ref{fig:example} shows a drawing computed with this algorithm. The construction for triconnected planar graphs uses a variant of the shifting method of De Fraysseix, Pach and Pollack~\cite{DBLP:journals/combinatorica/FraysseixPP90}; this construction is the building block for the drawing algorithm for biconnected planar graphs, which is based on the SPQR-tree decomposition of the graph into its triconnected components (see, e.g.,~\cite{DBLP:books/ph/BattistaETT99}). Finally, the result is extended to connected graphs by using a block-cutvertex tree decomposition as a guideline to assign subsets of the universal slope set to the different biconnected components of the input graph. If the graph is disconnected, since we use a universal set of slopes, the distinct connected components can be drawn independently.

\paragraph{Related work.} 
The results on the slope number of graphs are mainly classified into two categories based on whether the input graph is planar or not. For a (planar) graph $G$ of maximum degree~$\Delta$, the \emph{slope number} (\emph{planar slope number}) is the minimum number of slopes that are sufficient to compute a straight-line (planar) drawing of $G$. The slope number of non-planar  graphs is lower bounded by $\lceil \Delta / 2 \rceil$~\cite{DBLP:journals/cj/WadeC94} but it can be arbitrarily large, even when $\Delta = 5$~\cite{DBLP:journals/combinatorics/BaratMW06}. For $\Delta=3$ this number is $4$~\cite{DBLP:conf/gd/MukkamalaP11}, while it is unknown for $\Delta = 4$, to the best of our knowledge. Upper bounds on the slope number are known for complete graphs~\cite{DBLP:journals/cj/WadeC94} and outer $1$-planar graphs~\cite{DBLP:journals/jgaa/GiacomoLM15} (i.e., graphs that can be drawn in the plane such that each edge is crossed at most once, and all vertices are on the external boundary). Deciding whether a graph has slope number $2$ is NP-complete~\cite{DBLP:journals/comgeo/DujmovicESW07,DBLP:journals/siamcomp/FormannHHKLSWW93}.

For a planar graph $G$ of maximum degree $\Delta$, the planar slope number of $G$ is lower bounded by $3\Delta-6$ and upper bounded by $O(2^\Delta)$~\cite{DBLP:journals/siamdm/KeszeghPP13}. Improved upper bounds are known for special subclasses of planar graphs, e.g., planar graphs with $\Delta \leq 3$~\cite{DBLP:journals/comgeo/DujmovicESW07,DBLP:conf/latin/GiacomoLM14,DBLP:conf/wg/Kant92}, outerplanar graphs with $\Delta \geq 4$~\cite{DBLP:journals/comgeo/KnauerMW14}, partial $2$-trees~\cite{DBLP:conf/gd/LenhartLMN13}, planar partial $3$-trees~\cite{DBLP:journals/gc/JelinekJKLTV13}. Note that determining the planar slope number of a graph is hard in the existential theory of the reals~\cite{DBLP:journals/jgaa/Hoffmann17}.

Closely related to our problem is also the problem of finding \emph{$d$-linear} drawings of graphs, in which all angles (that are formed either between consecutive segments of an edge or between edge-segments incident to the same vertex) are multiples of $2 \pi / d$. Bodlaender and Tel~\cite{DBLP:journals/jgaa/BodlaenderT04} showed that, for $d=4$, an angular resolution of $2 \pi / d$ implies $d$-linearity and that this is not true for any $d > 4$.
Special types of $d$-linear drawings are the \emph{orthogonal}~\cite{DBLP:journals/comgeo/BiedlK98,DBLP:journals/comgeo/BlasiusLR16,DBLP:journals/siamcomp/GargT01,DBLP:journals/siamcomp/Tamassia87} and the \emph{octilinear}~\cite{DBLP:journals/jgaa/BekosG0015,DBLP:conf/latin/Bekos0016,Noellenburg05} drawings, for which $d=2$ and $d=4$ holds, respectively. As already recalled, Biedl and Kant~\cite{DBLP:journals/comgeo/BiedlK98}, and independently Liu et al.~\cite{DBLP:journals/dam/LiuMS98}, have shown that any planar graph with $\Delta \leq 4$ (except the octahedron) admits a planar orthogonal drawing with at most two bends per edge. Deciding whether a degree-4 planar graph has an orthogonal drawing with no bends is NP-complete~\cite{DBLP:journals/siamcomp/GargT01}, while it solvable in polynomial time if one bend per edge is allowed (see, e.g.,~\cite{DBLP:journals/algorithmica/BlasiusKRW14}). 
On the other hand, octilinear drawings have been mainly studied in the context of metro map visualization and map schematization~\cite{DBLP:journals/tvcg/NollenburgW11,DBLP:journals/tvcg/StottRMW11}. N{\"o}llenburg~\cite{Noellenburg05} proved that deciding whether a given embedded planar graph with $\Delta \leq 8$ admits a bendless planar octilinear drawing is NP-complete. Bekos et al.~\cite{DBLP:journals/jgaa/BekosG0015} showed that a planar graph with $\Delta \leq 5$ always admits a planar octilinear drawing with at most one bend per edge and that such drawings are not always possible if $\Delta \geq 6$. Note that in our work we generalize their positive result to any $\Delta$. Later, Bekos et al.~\cite{DBLP:conf/latin/Bekos0016} studied bounds on the total number of bends of planar octilinear drawings.

Finally, trade-offs between number of bends, angular resolution, and area requirement of planar drawings of graphs with maximum degree $\Delta$ are, for example, studied in~\cite{DBLP:conf/wg/BonichonSM02,DBLP:journals/dcg/DuncanEGKN13,DBLP:journals/jgaa/DuncanK03,DBLP:conf/gd/DurocherM14,DBLP:conf/esa/GargT94,DBLP:conf/gd/GutwengerM98}.

\paragraph{Paper organization.} The rest of this paper is organized as follows. Preliminaries are given in Section~\ref{sec:preliminaries}. In Section~\ref{sec:triconnected}, we describe a drawing algorithm for triconnected planar graphs.  The technique is extended to biconnected and to general planar graphs in Sections~\ref{sec:biconnected} and~\ref{sec:general}, respectively. Finally, in Section~\ref{sec:conclusions} we discuss further implications of Theorem~\ref{th:result} and we list open problems.

\section{Preliminaries}
\label{sec:preliminaries}

A graph $G=(V,E)$ containing neither loops nor multiple edges is \emph{simple}. We consider simple graphs, if not otherwise specified. The \emph{degree} of a vertex of $G$ is the number of its neighbors. We say that $G$ has \emph{maximum degree} $\Delta$ if it contains a vertex with degree $\Delta$ but no vertex with degree larger than $\Delta$. A graph is \emph{connected}, if for any pair of vertices there is a path connecting them. Graph $G$ is \emph{$k$-connected}, if the removal of $k-1$ vertices leaves the graph connected. A $2$-connected ($3$-connected) graph is also called \emph{biconnected} (\emph{triconnected}, respectively). 

A \emph{drawing} $\Gamma$ of $G$ maps each vertex of $G$ to a point in the plane and each edge of $G$ to a Jordan arc between its two endpoints. A drawing is \emph{planar}, if no two edges cross (except at common endpoints). A planar drawing divides the plane into connected regions, called \emph{faces}. The unbounded one is called \emph{outer face}. A graph is \emph{planar}, if it admits a planar drawing.  A \emph{planar embedding} of a planar graph is an equivalence class of planar drawings that combinatorially define the same set of faces and outer~face.  

The \emph{slope} $s$ of a line $\ell$ is the angle that a horizontal line needs to be rotated counter-clockwise in order to make it overlap with $\ell$.  The slope of an edge-segment is the slope of the line containing the segment. Let $S$ be a set of slopes sorted in increasing order; assume w.l.o.g.\ up to a rotation, that $S$ contains the $0$ angle, which we call \emph{horizontal slope}. A \emph{1-bend} planar drawing $\Gamma$ of graph $G$ \emph{on $S$} is a planar drawing of $G$ in which every edge is composed of at most two straight-line segments, each of which has a slope that belongs to $S$. We say that $S$ is \emph{equispaced} if and only if the difference between any two consecutive slopes of $S$ is $\frac{\pi}{|S|}$. For a vertex $v$ in $G$, each slope $s \in S$ defines two different \emph{rays} that emanate from $v$ and have slope $s$. If $s$ is the horizontal slope, then these rays are called \emph{horizontal}. Otherwise, one of them is the \emph{top} and the other one is the \emph{bottom} ray of $v$. Consider a 1-bend planar drawing $\Gamma$ of a graph $G$ and a ray $r_v$ emanating from a vertex $v$ of $G$. We say that $r_v$ is \emph{free} if there is no edge attached to $v$ through $r_v$. We also say that $r_v$ is \emph{incident} to face $f$ of $\Gamma$ if and only if $r_v$ is free and the first face encountered when moving from $v$ along $r_v$ is $f$. 

Let $\Gamma$ be a 1-bend planar drawing of a graph and let $e$ be an edge incident to the outer face of $\Gamma$ that has a horizontal segment. A \emph{cut at} $e$ is a $y$-monotone curve that 
\begin{inparaenum}[(i)]
\item starts at any point of the horizontal segment of $e$, 
\item ends at any point of a horizontal segment of an edge $e' \neq e$  incident to the outer face of $\Gamma$, and 
\item crosses only horizontal segments of $\Gamma$.
\end{inparaenum}

Central in our approach is the canonical order of triconnected planar graphs~\cite{DBLP:journals/combinatorica/FraysseixPP90,DBLP:journals/algorithmica/Kant96}. Let $G=(V,E)$ be a triconnected planar graph and let $\Pi = (P_0,\ldots,P_m)$ be a partition of $V$ into paths, such that $P_0 = \{v_1,v_2\}$, $P_m=\{v_n\}$, edges $(v_1,v_2)$ and $(v_1,v_n)$ exist and belong to the outer face of $G$. For $k=0,\ldots,m$, let $G_k$ be the subgraph induced by $\cup_{i=0}^k P_i$ and denote by $C_k$  the  outer  face of $G_k$.  $\Pi$ is a \emph{canonical order} of $G$ if for each $k=1,\ldots,m-1$ the following hold: %
\begin{inparaenum}[(i)]
\item $G_k$ is biconnected, 
\item all neighbors of $P_k$ in $G_{k-1}$ are on $C_{k-1}$,
\item $|P_k|=1$ or the degree of each vertex of $P_k$ is two in $G_k$, and  
\item all vertices of $P_k$ with $0\leq k < m$ have at least one neighbor in $P_j$ for some $j > k$.
\end{inparaenum}
A canonical order of any triconnected planar graph can be computed in linear time~\cite{DBLP:journals/algorithmica/Kant96}.

An SPQR-tree $\mathcal{T}$ represents the decomposition of a biconnected graph $G$ into its triconnected components (see, e.g.,~\cite{DBLP:books/ph/BattistaETT99}) and it can be computed in linear time~\cite{DBLP:conf/gd/GutwengerM00}. Every triconnected component of $G$ is associated with a node $\mu$ of  $\mathcal{T}$. The triconnected component itself is called the \emph{skeleton} of $\mu$, denoted by $\skel{\mu}$. A node $\mu$ in $\mathcal{T}$ can be of four different types: %
\begin{inparaenum}[(i)]
\item $\mu$ is an \emph{R-node}, if $\skel{\mu}$ is a triconnected graph,
\item a simple cycle of length at least three classifies $\mu$ as an \emph{S-node},
\item a bundle of at least three parallel edges classifies $\mu$ as a \emph{P-node},
\item the leaves of $\mathcal{T}$ are \emph{Q-nodes}, whose skeleton consists of two parallel edges.
\end{inparaenum}
Neither two $S$- nor two $P$-nodes are adjacent in~$\mathcal{T}$. 
A \emph{virtual edge} in $\skel{\mu}$ corresponds to a tree node $\nu$ that is adjacent to $\mu$ in $\mathcal{T}$, more precisely, to another virtual edge in $\skel{\nu}$. If we assume that $\mathcal{T}$ is rooted at a Q-node $\rho$, then every skeleton (except the one of $\rho$) contains exactly one virtual edge, called \emph{reference edge} and whose endpoints are the \emph{poles} of $\mu$, that has a counterpart in the skeleton of its parent. 
%We  denote the reference edge of $\mu$ by $\redge{\mu}$. 
%Its endpoints, $u$ and $v$, are named the \emph{poles} of $\mu$. 
Every subtree $\mathcal{T}_\mu$ rooted at a node $\mu$ of $\mathcal{T}$ induces a subgraph $G_\mu$ of $G$ called \emph{pertinent}, that is described by $\mathcal{T}_\mu$ in the decomposition.

Finally, the \emph{BC-tree} $\mathcal{B}$ of a connected graph $G$ represents the decomposition of $G$ into its biconnected components. $\mathcal{B}$ has a B-node for each biconnected component of $G$ and a C-node for each cutvertex of $G$. Each B-node is connected to the C-nodes that are part of its biconnected~component.
\section{Triconnected Planar Graphs}
\label{sec:triconnected}

Let $G$ be a triconnected planar graph of maximum degree $\Delta \geq 4$ and let $S$ be a set of $\Delta-1$~equispaced slopes containing the horizontal one. We consider the vertices of $G$ according to a canonical order $\Pi = (P_0,\ldots,P_m)$. At each step $k=0,\ldots,m$, we consider the planar graph $G_k^-$ obtained by removing edge $(v_1,v_2)$ from $G_k$. Let $C_k^-$ be the path from $v_1$ to $v_2$ obtained by removing $(v_1,v_2)$ from $C_k$. We seek to construct a $1$-bend planar drawing of $G_k^-$ on $S$ satisfying the following~invariants.%
\begin{enumerate}[{I.}1]
\item \label{i:v1-v2} No part of the drawing lies below vertices $v_1$ and $v_2$, which have the same $y$-coordinate.
\item \label{i:stretch} Every edge on $C_k^-$ has a horizontal segment.   
\item \label{i:rays} Each vertex $v$ on $C_k^-$ has at least as many free top rays incident to the outer face of $G_k^-$ as the number of its neighbors in $G \setminus G_k$.
\end{enumerate}

Once a 1-bend planar drawing on $S$ of $G_m^-$ satisfying Invariants~I.\ref{i:v1-v2}--I.\ref{i:rays} has been constructed, a 1-bend planar drawing on $S$ of $G=G_m^-\cup \{(v_1,v_2)\}$ can be obtained by drawing edge $(v_1,v_2)$ as a polyline composed of two straight-line segments, one attaching at the first clockwise bottom ray of $v_1$ and the other one at the first anti-clockwise bottom ray of $v_2$. Note that, since $S$ has at least three slopes, these two rays cross. Invariant~I.\ref{i:v1-v2} ensures that edge $(v_1,v_2)$ does not introduce any crossing. 
In the following lemma, we show an important property of any $1$-bend planar drawing on $S$ satisfying Invariants~I.\ref{i:v1-v2}--I.\ref{i:rays}. 

\begin{lemma}\label{le:stretching}
Let $\Gamma_k$ be a $1$-bend planar drawing on $S$ of $G_k^-$ satisfying Invariants~I.\ref{i:v1-v2}--I.\ref{i:rays}. Let $(u,v)$ be an edge of $C_k^-$ such that $u$ precedes $v$ along path $C_k^-$ and let $\sigma$ be any positive number. It is possible to construct a $1$-bend planar drawing $\Gamma_k'$ on $S$ of $G_k^-$, satisfying Invariants~I.\ref{i:v1-v2}--I.\ref{i:rays}, in which the horizontal distance between any two consecutive vertices along $C_k^-$ is the same as in $\Gamma_k$, except for $u$ and $v$, whose horizontal distance is increased by~$\sigma$.
\end{lemma}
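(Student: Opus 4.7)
The plan is to construct a $y$-monotone \emph{cut} $\gamma$ at the edge $(u,v)$, in the sense defined in Section~\ref{sec:preliminaries}, and then horizontally translate everything to the right of $\gamma$ by $\sigma$, leaving everything to the left of $\gamma$ fixed. By the defining property of a cut, $\gamma$ crosses only horizontal segments of $\Gamma_k$; hence every edge split by $\gamma$ has a horizontal segment at the crossing point, and this segment can simply be elongated by $\sigma$ without changing its slope or introducing any new crossing. Every other edge is either translated rigidly (both endpoints to the right of $\gamma$) or left untouched (both endpoints to the left of $\gamma$), so its shape and its set of slopes are preserved.

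The first step, and the technical heart of the proof, is to construct $\gamma$. Starting from a point on the horizontal segment of $(u,v)$, I would trace the cut monotonically through the drawing: whenever the curve is about to meet a non-horizontal edge segment, I would detour it along a horizontal segment that it has just crossed, thereby preserving $y$-monotonicity, until an unobstructed continuation is again possible. Since the drawing is bounded, all edges have at most one bend and use slopes from the equispaced set $S$, and horizontal segments are guaranteed along $C_k^-$ by Invariant~I.\ref{i:stretch}, this procedure can be shown to terminate on the horizontal segment of some edge $e' \neq (u,v)$ of $C_k^-$. Arguing that a valid cut can always be found is a combinatorial–geometric analysis of how $y$-monotone curves navigate a 1-bend planar drawing with slopes in a fixed equispaced set, and it is where I expect the main obstacle to lie.

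Once $\gamma$ is available, I would obtain $\Gamma_k'$ by translating by $\sigma$ every vertex, bend, and non-horizontal segment strictly to the right of $\gamma$, and by elongating by $\sigma$ every horizontal segment that $\gamma$ crosses. It then remains to verify that $\Gamma_k'$ is indeed a 1-bend planar drawing on $S$ satisfying the invariants and the required distance property. All slopes lie in $S$ because horizontal segments stay horizontal and the other segments are rigid translates; planarity is preserved because the modification amounts to inserting an empty strip of width $\sigma$ along $\gamma$, so no new crossing can appear inside it. Invariant~I.\ref{i:v1-v2} holds because the transformation is purely horizontal; Invariants~I.\ref{i:stretch} and~I.\ref{i:rays} hold because the horizontal segments of $C_k^-$ are either preserved or elongated and the free top rays of each vertex are translated rigidly with it, with no new obstacle introduced in the outer face. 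Finally, any two consecutive vertices of $C_k^-$ other than $u,v$ lie on the same side of $\gamma$ and so maintain their horizontal distance, whereas $u$ and $v$ are separated by $\gamma$, which starts between them on the horizontal segment of $(u,v)$, so their horizontal distance grows by exactly $\sigma$.
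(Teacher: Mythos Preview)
Your overall architecture---build a $y$-monotone cut at $(u,v)$, translate the right side by~$\sigma$, and verify the invariants---is exactly the paper's approach, and your verification paragraph is essentially correct. The gap is precisely where you flag it: the construction of the cut.

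Your proposed geometric procedure (``detour along a horizontal segment just crossed when blocked by a non-horizontal one'') is not well-defined and there is no argument that it terminates or succeeds. More importantly, Invariant~I.\ref{i:stretch} only guarantees horizontal segments for edges on the \emph{current} boundary $C_k^-$; the cut must cross \emph{internal} edges of $\Gamma_k$, and you give no reason those should have horizontal segments at all. The paper closes this gap with a structural argument based on the canonical order rather than a local geometric one: starting from the internal face $f$ incident to $(u,v)$, observe that $f$ was created when some path $P_{k'}$ (with $k'\le k$) was added, covering a portion of $C_{k'-1}^-$; hence $f$ is bounded below by at least one edge that lay on $C_{k'-1}^-$, and by Invariant~I.\ref{i:stretch} (applied at step $k'-1$) that edge has a horizontal segment. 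The cut crosses this segment into the next face, and the argument repeats until the outer face is reached. This face-by-face induction via the canonical order is the missing idea; once you have it, the rest of your write-up goes through unchanged.
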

\begin{proof}
We first show that there exists a cut of $\Gamma_k$ at $(u,v)$ that separates the subpath of $C_k^-$ connecting $v_1$ to $u$ from the subpath of $C_k^-$ connecting $v$ to $v_2$. We  use this cut to construct $\Gamma_k'$ as a copy of $\Gamma_k$ in which all the horizontal segments that are crossed by the cut are elongated by~$\sigma$.
	
By Invariant I.\ref{i:stretch}, edge $(u,v)$ has a horizontal segment, which is the first segment crossed by the cut we are going to construct. Then, consider the internal face $f$ of $\Gamma_k$ edge $(u,v)$ is incident to; this face is uniquely defined since $G_k^-$ is biconnected and $(u,v)$ is incident to the outer face. By the properties of the canonical order, there exists at least an edge $e_f$ incident to $f$ that belongs to $G_{k-1}^-$ but not to $G_k^-$; in particular, this edge belongs to $C_{k-1}^-$, and hence has a horizontal segment, by Invariant I.\ref{i:stretch}; see Fig.~\ref{fig:stretching-before}. We thus make our cut traverse face $f$ and cross the horizontal segment of $e_f$. By repeating this argument until reaching the outer face, we obtain the desired cut.  

\begin{figure}[b]
	\centering 
	\begin{minipage}[b]{.3\textwidth}
		\centering 
		\includegraphics[width=.95\textwidth, page=1]{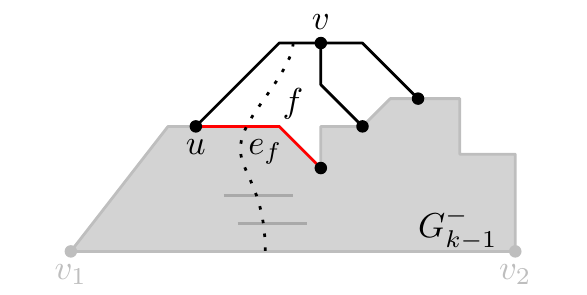}
		\subcaption{~}\label{fig:stretching-before}
	\end{minipage}
	\hfil
	\begin{minipage}[b]{.3\textwidth}
		\centering 
		\includegraphics[width=.95\textwidth, page=2]{img/stretching}
		\subcaption{~}\label{fig:stretching-after}
	\end{minipage}
	\hfil
	\begin{minipage}[b]{.3\textwidth}
		\centering 
		\includegraphics[width=.95\textwidth,page=3]{img/stretching}
		\subcaption{~}\label{fig:shooting-to-inf}
	\end{minipage}
	\caption{Illustrations for (a-b) Lemma~\ref{le:stretching} and (c) Lemma~\ref{le:shooting-to-inf}.}
	\label{fig:stretch-lemmas}
\end{figure}

We now describe how to obtain a drawing $\Gamma_k'$ of $G_k^-$ satisfying all the required properties; refer to Fig.~\ref{fig:stretching-after}. Let $L$ and $R$ be the two sets of vertices separated by the cut. All the vertices in $L$ and all the edges between any two of them are drawn in $\Gamma_k'$ as in $\Gamma_k$; all the vertices in $R$ and all the edges between any two of them are drawn in $\Gamma_k'$ as in $\Gamma_k$, after a translation to the right by $\sigma$. Finally, for each edge that is crossed by the cut, the part that is not horizontal, if any, is drawn in $\Gamma_k'$ as in $\Gamma_k$, while the horizontal part is elongated by $\sigma$.

We prove that $\Gamma_k'$ satisfies all the required properties. First, $\Gamma_k'$ is a 1-bend planar drawing of $G_k^-$ on $S$ since $\Gamma_k$ is. Invariant I.\ref{i:v1-v2} holds since the $y$-coordinates of the vertices have not been changed, while Invariants~I.\ref{i:stretch}--I.\ref{i:rays} hold since all the edges are attached to their incident vertices in $\Gamma_k'$ using the same rays as in $\Gamma_k$. The fact that the horizontal distances among consecutive vertices of $C_k^-$ are the required ones descends from the fact that $L$ contains all the vertices in the path of $C_k^-$ from $v_1$ to $u$, while $R$ contains all the vertices in the path of $C_k^-$ from $v$ to $v_2$. 
\end{proof}

Invariant~I.\ref{i:rays} guarantees that every vertex on $C_k^-$ has enough free top rays incident to the outer face to attach all its incident edges following it in the canonical order. The next lemma shows that these rays can be always used to actually draw these edges (see Fig.~\ref{fig:shooting-to-inf}).

\begin{lemma}\label{le:shooting-to-inf}
Let $\Gamma_k$ be a $1$-bend planar drawing on $S$ of $G_k^-$ satisfying Invariants~I.\ref{i:v1-v2}--I.\ref{i:rays}. Let $u$ be any vertex of $C_k^-$, and let $r_u$ be any free top ray of $u$ that is incident to the outer face of $G_k^-$ in $\Gamma_k$. Then, it is possible to construct a $1$-bend planar drawing $\Gamma_k'$ on $S$ of $G_k^-$, satisfying Invariants~I.\ref{i:v1-v2}--I.\ref{i:rays}, in which $r_u$ does not cross any edge of $\Gamma_k'$.
\end{lemma}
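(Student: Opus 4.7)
The strategy is to apply Lemma~\ref{le:stretching} iteratively along $C_k^-$ so that, after the stretches, the ray $r_u$ has enough room to reach infinity without meeting any edge. Assume without loss of generality that the slope $s$ of $r_u$ lies in $(0,\pi/2]$, so that $r_u$ is directed upward and (weakly) to the right; the case $s\in(\pi/2,\pi)$ is symmetric and is handled by applying stretches on the other side of $u$.

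First I would observe that, since $r_u$ is incident to the outer face of $G_k^-$ in $\Gamma_k$, any edge of $\Gamma_k$ that $r_u$ could possibly cross must lie on $C_k^-$: to leave the outer face along $r_u$ one must cross its boundary, which is exactly $C_k^-$. Moreover, because $r_u$ points (weakly) to the right, any obstructing edge of $C_k^-$ must lie on the subpath $P$ of $C_k^-$ going from $u$ to $v_2$. Hence it suffices to produce a 1-bend planar drawing $\Gamma_k'$ satisfying Invariants~I.\ref{i:v1-v2}--I.\ref{i:rays} in which no edge of $P$ is crossed by $r_u$.

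Walking along $P = u, w_1, w_2, \ldots, v_2$, I would process its edges one at a time. For the edge $e_i = (w_{i-1}, w_i)$ (with $w_0 = u$), Lemma~\ref{le:stretching} is invoked with a stretching parameter $\sigma_i$ large enough that, in the resulting drawing, the vertex $w_i$ (together with every subsequent vertex on $P$ and every interior vertex translated together with it) lies strictly to the right of the line $\ell$ supporting $r_u$. Since $s > 0$, the line $\ell$ has finite horizontal displacement at every finite height $y \le H$, where $H$ is the maximum $y$-coordinate of any vertex; hence any $\sigma_i$ exceeding the current horizontal diameter of the drawing plus $H \cot s$ suffices. By Lemma~\ref{le:stretching}, every such stretch preserves the fact that the drawing is 1-bend planar on $S$, preserves Invariants~I.\ref{i:v1-v2}--I.\ref{i:rays}, and does not alter the slopes of the rays at $u$; in particular $r_u$ remains a free top ray of $u$ throughout.

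The delicate step, and the expected main obstacle, is to show that the horizontal segments that get elongated (rather than merely translated) during these stretches cannot themselves obstruct $r_u$. Each such segment has its left endpoint on the side of the cut containing $u$, so this endpoint does not move during the corresponding stretch. I would argue that such a left endpoint either lies weakly below $y_u$, in which case the upward ray $r_u$ never reaches it, or already lies strictly to the right of $\ell$, in which case elongation only pushes the segment further to the right and $r_u$ still does not cross it. Establishing this dichotomy uses planarity, the hypothesis that $r_u$ is free and incident to the outer face at $u$ (so that edges incident to $u$ diverge from $r_u$ at $u$ and cannot meet it near $u$), and an induction maintained throughout the procedure asserting that every vertex and every bend lying strictly above $y_u$ and to the right of $u$ along $C_k^-$ is located strictly to the right of $\ell$. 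Once this invariant is verified, after processing all edges of $P$ the resulting drawing $\Gamma_k'$ satisfies Invariants~I.\ref{i:v1-v2}--I.\ref{i:rays} and $r_u$ crosses no edge of $\Gamma_k'$.
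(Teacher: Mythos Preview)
Your overall strategy---use Lemma~\ref{le:stretching} to push the boundary away from $r_u$---is the right one, but your execution is considerably more involved than the paper's and contains an unjustified step.

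The paper's proof applies Lemma~\ref{le:stretching} exactly \emph{twice}: once on the edge $(u,u_1)$ of $C_k^-$ joining $u$ to its predecessor $u_1$ on $P_1$ (the subpath from $v_1$ to $u$), and once on the edge $(u,u_2)$ joining $u$ to its successor $u_2$ on $P_2$ (the subpath from $u$ to $v_2$). A sufficiently large stretch of $(u,u_1)$ pushes all of $P_1$ far to the left of $u$, and a sufficiently large stretch of $(u,u_2)$ pushes all of $P_2$ far to the right; since the drawing has bounded height and $r_u$ has a fixed nonhorizontal slope, each stretch removes all crossings with the corresponding side without creating new ones on the other side. No case analysis on the slope of $r_u$ is needed, and no iteration along the path is needed.

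By contrast, you (i) split into cases on whether $r_u$ leans right or left, (ii) assert that all obstructions lie on only one of the two subpaths, and (iii) then stretch each edge of that subpath in turn. Point~(iii) is harmless but redundant: a single stretch of $(u,w_1)$ by a large $\sigma$ already translates every vertex of $P$ (and the portion of the drawing on that side of the cut) by $\sigma$, achieving in one step what your iteration does in many. Point~(ii) is the real issue: the claim ``because $r_u$ points (weakly) to the right, any obstructing edge of $C_k^-$ must lie on the subpath from $u$ to $v_2$'' is not proved, and it is not obvious---nothing in Invariants~I.\ref{i:v1-v2}--I.\ref{i:rays} prevents $P_1$ from rising above $u$ and extending to the right before descending to $u$, so an up-right ray could in principle meet $P_1$. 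A correct argument here would require a careful orientation analysis of the outer-face boundary. The paper sidesteps the question entirely by stretching both sides.

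Your ``delicate step'' about elongated horizontal segments also becomes moot under the paper's approach: when only the two edges incident to $u$ are stretched, the cuts pass through $u$'s neighbouring edges on $C_k^-$, and everything on the far side of each cut (including all bends) is translated rigidly away from $r_u$.
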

\begin{proof}
Since $r_u$ is a top ray of $u$ incident to the outer face of $\Gamma_k^-$ and due to Invariant I.\ref{i:v1-v2}, if $r_u$ crosses some edges of $G_k^-$, then at least one of these belongs to $C_k^-$. So, we can focus on removing the crossings with the edges of $C_k^-$. Let $P_1$ be the path of $C_k^-$ between $v_1$ and $u$, and let $P_2$ be the path of $C_k^-$ between $u$ and $v_2$. Also, let $u_1$ and $u_2$ be the neighbors of $u$ in $P_1$ and $P_2$, respectively. 
Refer to Fig.~\ref{fig:shooting-to-inf}. By Lemma~\ref{le:stretching}, we can elongate $(u,u_1)$ to eliminate all crossings between $r_u$ and edges of $P_1$ without introducing any new crossings between $r_u$ and edges of $P_2$. We also elongate $(u,u_2)$ to eliminate all crossings between $r_u$ and edges of $P_2$ without introducing any new crossings between $r_u$ and edges of $P_1$. The obtained drawing $\Gamma_k'$ satisfies all the requirements of the lemma. This concludes our proof.
\end{proof}

We now describe our algorithm. First, we draw $P_0=\{v_1,v_2\}$ and $P_1=\{v_3,\ldots,v_j\}$ of partition $\Pi$ such that $v_1, v_3, \dots, v_j, v_2$ lie along a horizontal line, in this order (recall that edge $(v_1,v_2)$ is not considered). Invariants~I.\ref{i:v1-v2} and I.\ref{i:stretch} clearly hold. Invariant I.\ref{i:rays}  follows from the fact that $S$ contains $\Delta-2$ top rays and all vertices drawn so far (including $v_1$ and $v_2$) have at most $\Delta-2$ neighbors later in the canonical order. We now describe how to add path $P_k$, for some $k > 1$, to a drawing $\Gamma_{k-1}$ satisfying Invariants I.\ref{i:v1-v2}--I.\ref{i:rays}, in such a way that the resulting drawing $\Gamma_{k}$ of $G_{k}^-$ is a 1-bend planar drawing on $S$ satisfying Invariants~I.\ref{i:v1-v2}--I.\ref{i:rays}. We distinguish two cases, based on whether $P_k$ is a chain or a singleton.

Suppose first that $P_k$ is a chain, say $\{v_i,v_{i+1},\ldots,v_j\}$; refer to Fig.~\ref{fig:chain}. Let $u_\ell$ and $u_r$ be the neighbors of $v_i$ and $v_j$ in $C_{k-1}^-$, respectively. 
%Note that $u_\ell$ and $u_r$ are not necessarily consecutive along $C_{k-1}^-$. 
By Invariant I.\ref{i:rays}, each of $u_\ell$ and $u_r$ has at least one free top ray that is incident to the outer face of $\Gamma_{k-1}$; among them, we denote by $\topfirst{a}{u_\ell}$ the first one in anti-clockwise order for $u_\ell$, and by $\topfirst{c}{u_r}$ the first one in clockwise order for $u_r$. By Lemma~\ref{le:shooting-to-inf}, we can assume that $\topfirst{a}{u_\ell}$ and $\topfirst{c}{u_r}$ do not cross any edge in $\Gamma_{k-1}$. This implies that there exists a horizontal line-segment $h$ whose left and right endpoints are on $\topfirst{a}{u_\ell}$ and $\topfirst{c}{u_r}$, respectively, that does not cross any edge of $\Gamma_{k-1}$. We place all the vertices $v_i,v_{i+1},\ldots,v_j$ of $P_k$ on interior points of $h$, in this left-to-right order. Then, we draw edge $(u_\ell,v_i)$ with a segment along $h$ and the other one along $\topfirst{a}{u_\ell}$; we draw edge $(u_r,v_j)$ with a segment along $h$ and the other one along $\topfirst{c}{u_r}$, and we draw every edge $(v_q,v_{q+1})$, with $q=i,\dots,j-1$, with a unique segment along $h$. 

\begin{figure}[t]
    \centering 
    \begin{minipage}[b]{.45\textwidth}
        \centering
        \includegraphics[width=\textwidth]{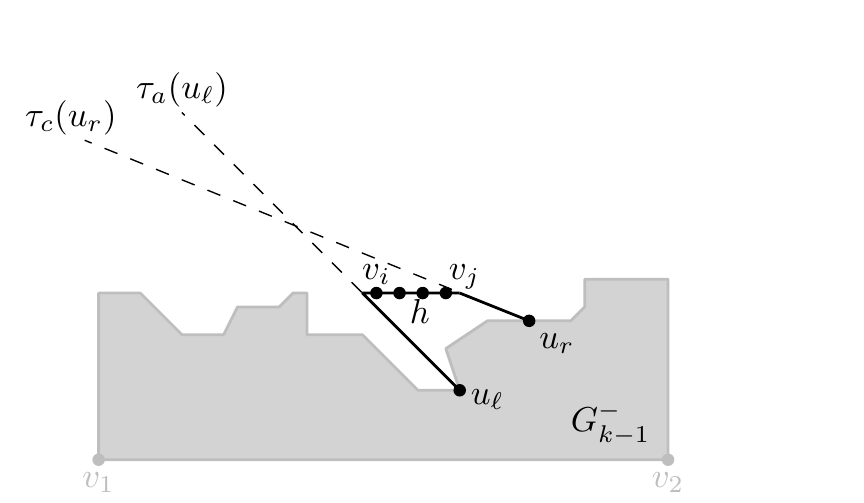}
        \subcaption{~}\label{fig:chain}{}
    \end{minipage}
	\hfil
    \begin{minipage}[b]{.45\textwidth}
        \centering
        \includegraphics[width=\textwidth]{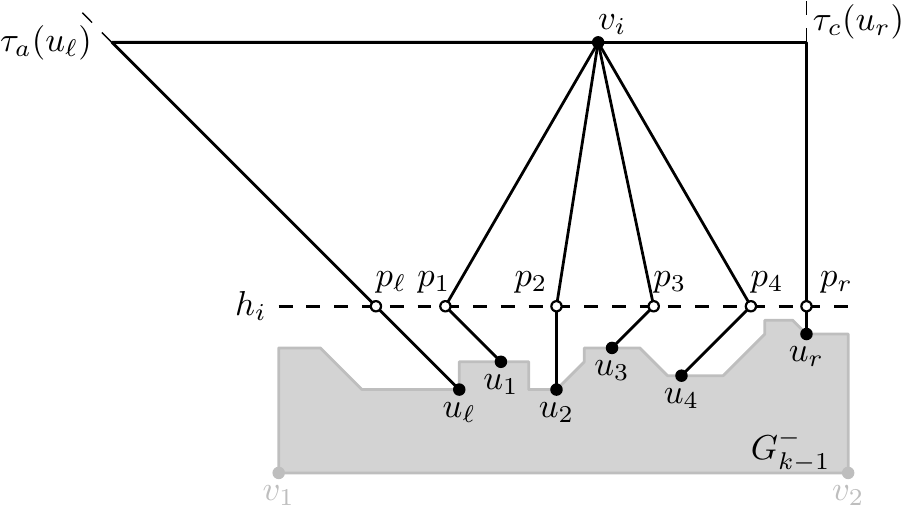}
        \subcaption{~}\label{fig:singleton}{}
    \end{minipage}
    \caption{Illustration of the cases of: 
    (a) a chain, 
    (b) a singleton of degree $\delta_i$ in $G_k$.}
    \label{fig:4p_ref}
\end{figure}

By construction, $\Gamma_k$ is a planar drawing on $S$.  
All the vertices of $P_k$ lie above $u_\ell$ and $u_r$, since $\topfirst{a}{u_\ell}$ and $\topfirst{c}{u_r}$ are top rays of $u_\ell$ and $u_r$, respectively. Hence, these vertices and their incident edges lie above $v_1$ and $v_2$, and thus Invariant I.\ref{i:v1-v2} is satisfied by $\Gamma_k$. Invariant I.\ref{i:stretch} is satisfied since every edge that is drawn at this step has a segment along $h$, which is horizontal. Invariant~I.\ref{i:rays} is satisfied since we attached edges $(u_\ell, v_i)$ and $(u_r,v_j)$ at vertices $u_\ell$ and $u_r$ using the first anti-clockwise free top ray of $u_\ell$ and the first clockwise free top ray of $u_r$ among those incident to the outer face, respectively. Thus, we reduced only by one the number of free top rays incident to the outer face for $u_\ell$ and $u_r$. For the other vertices of $P_k$, the invariant is satisfied since their $\Delta-2$ top rays are free and incident to the outer face. This concludes our description for the case in which $P_k$ is a chain.

Suppose now that $P_k$ is a singleton, say $\{v_i\}$, of degree $\delta_i \leq \Delta$ in $G_{k}^-$. This also includes the case in which $k=m$, that is, $P_k$ is the last path of $\Pi$. If $\delta_i=2$, then $v_i$ is placed as in the case of a chain. So, we may assume in the following that $\delta_i \geq 3$. Let $u_\ell,u_1,u_2,\ldots,u_{\delta_i-2},u_r$ be the neighbors of $v_i$ as they appear along $C_{k-1}^-$.

Refer to Fig.~\ref{fig:singleton}. By Invariant I.\ref{i:rays}, each neighbor of $v_i$ in $C_{k-1}^-$ has at least one free top ray that is incident to the outer face of $\Gamma_{k-1}$; among them, we denote by $\topfirst{a}{u_\ell}$ the first one in anti-clockwise order for $u_\ell$ and by $\topfirst{c}{u_r}$ the first one in clockwise order for $u_r$, as in the case in which $P_k$ is a chain, while for each vertex $u_q$, with $q=1,\dots,\delta_i-2$, we denote by $\topfirst{~}{u_q}$ any of these rays arbitrarily. By Lemma~\ref{le:shooting-to-inf}, we can assume that these rays do not cross any edge in $\Gamma_{k-1}$.

Consider any horizontal line $h_i$ lying above all vertices of $\Gamma_{k-1}$. Rays $\topfirst{a}{u_\ell}$, $\topfirst{~}{u_1}, \dots, \topfirst{~}{u_{\delta_i-2}}$, $\topfirst{c}{u_r}$ cross $h_i$; however, the corresponding intersection points $p_\ell$, $p_1, \dots, p_{\delta_i-2}$, $p_r$ may not appear in this left-to-right order along $h_i$; see Fig.~\ref{fig:intersections-before}. To guarantee this property, we perform a sequence of stretchings of $\Gamma_{k-1}$ by repediately applying Lemma~\ref{le:stretching}. First, if $p_\ell$ is not the leftmost of these intersection points, let $\sigma$ be the distance between $p_\ell$ and the leftmost intersection point. We apply Lemma~\ref{le:stretching} on any edge between $u_\ell$ and $u_1$ along $C_{k-1}^-$ to stretch $\Gamma_{k-1}$ so that all the vertices in the path of $C_k^-$ from $u_1$ to $v_2$ are moved to the right by a quantity $\sigma'$ slightly larger than $\sigma$. This implies that $p_\ell$ is not moved, while all the other intersection points are moved to the right by a quantity $\sigma'$, and thus they all lie to the right of $p_\ell$ in the new drawing; see Fig.~\ref{fig:intersections-after}. Analogously, we can move $p_1$ to the left of every other intersection point, except for $p_\ell$, by applying Lemma~\ref{le:stretching} on any edge between $u_1$ and $u_2$ along $C_{k-1}^-$. Repeating this argument allows us to assume that in $\Gamma_{k-1}$ all the intersection points appear in the correct left-to-right order along $h_i$.

\begin{figure}[t]
	\centering 
	\begin{minipage}[b]{.45\textwidth}
		\centering
		\includegraphics[width=\textwidth,page=1]{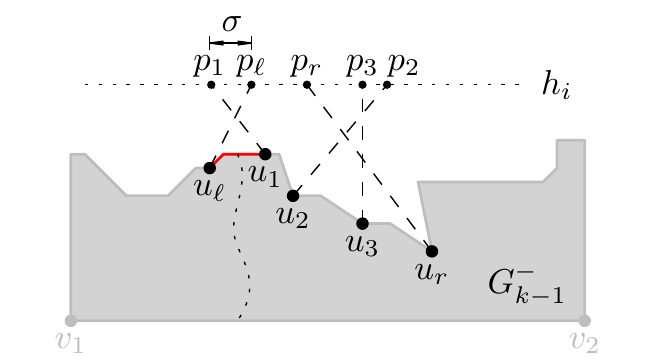}
		\subcaption{~}\label{fig:intersections-before}{}
	\end{minipage}
	\hfil
	\begin{minipage}[b]{.45\textwidth}
		\centering 
		{\includegraphics[width=\textwidth,page=2]{img/intersections}}
		\subcaption{~}\label{fig:intersections-after}{}
	\end{minipage}
	\caption{(a) Intersection points $p_\ell,p_1, \dots, p_{\delta_i-2},p_r$ appear in a wrong order along $h_i$. 
	(b) Applying Lemma~\ref{le:stretching} to make $p_\ell$ be the leftmost intersection point.}
	\label{fig:intersections}
\end{figure}

We now describe the placement of $v_i$. Let $\bottomport{1}{v_i},\ldots,\bottomport{\delta_i-2}{v_i}$ be any set of ${\delta_i-2}$ consecutive bottom rays of $v_i$; to see that $v_i$ has enough bottom rays, recall that $S$ contains $\Delta-1$ slopes and that $\delta_i \leq \Delta$. Observe that, if we place $v_i$ above $h_i$, rays $\bottomport{1}{v_i},\bottomport{2}{v_i},\ldots,\bottomport{\delta_i-2}{v_i}$ intersect $h_i$ in this left-to-right order. Let $\rho_1,\dots,\rho_{\delta_i-2}$ be the corresponding intersection points. The goal is to place $v_i$ so that each $\rho_q$, with $q=1,\dots,\delta_i-2$, coincides with $p_q$. To do so, consider the line $\lambda_1$ passing through $p_1$ with the same slope as $\bottomport{1}{v_i}$. Observe that placing $v_i$ on $\lambda_1$ above $h_i$ results in $\rho_1$ to coincide with $p_1$. Also note that, while moving $v_i$ upwards along $\lambda_1$, the distance $d(\rho_q,\rho_{q+1})$ between any two consecutive points $\rho_q$ and $\rho_{q+1}$, with $q=1,\dots,\delta_i-3$, increases. 

We move $v_i$ upwards along $\lambda_1$ in such a way that $d(\rho_q,\rho_{q+1}) > d(p_1,p_{\delta_i-2})$, for each $q=1,\dots,\delta_i-3$. This implies that all points $p_2,\dots,p_{\delta_i-2}$ lie strictly between $\rho_1$ and $\rho_2$. Then, we apply Lemma~\ref{le:stretching} on any edge between $u_1$ and $u_2$ along $C_{k-1}^-$ to stretch $\Gamma_{k-1}$ so that all the vertices in the path of $C_k^-$ from $u_2$ to $v_2$ are moved to the right by a quantity $d(p_2,\rho_2)$. In this way, $u_1$ is not moved and so $p_1$ still coincides with $\rho_1$; also, $p_2$ is moved to the right to coincide with $\rho_2$; finally, since $d(\rho_2,\rho_3) > d(p_1,p_{\delta_i-2}) > d(p_2,p_{\delta_i-2})$, all points $p_3,\dots,p_{\delta_i-2}$ lie strictly between $\rho_2$ and $\rho_3$. By repeating this transformation for all points $p_3,\dots,p_{\delta_i-2}$, if any, we guarantee that each $\rho_q$, with $q=1,\dots,\delta_i-2$, coincides with $p_q$. We draw each edge $(v_i,u_q)$, with $q=1,\dots,\delta_i-2$, with a segment along $\topfirst{~}{u_q}$ and the other one along $\bottomport{q}{v_i}$. 

It remains to draw edges $(v_i,u_\ell)$ and $(v_i,u_r)$, which by Invariant I.\ref{i:stretch} must have a horizontal segment. After possibly applying Lemma~\ref{le:stretching} on any edge between $u_\ell$ and $u_1$ along $C_{k-1}^-$ to stretch $\Gamma_{k-1}$, we can guarantee that $\topfirst{a}{u_\ell}$ crosses the horizontal line through $v_i$ to the left of $v_i$. Similarly, we can guarantee that $\topfirst{c}{u_r}$ crosses the horizontal line through $v_i$ to the right of $v_i$ by applying Lemma~\ref{le:stretching} on any edge between $u_{\delta_i-2}$ and $u_r$. We draw edge $(v_i,u_\ell)$ with one segment along $\topfirst{a}{u_\ell}$ and one along the horizontal line through $v_i$, and we draw edge $(v_i,u_r)$ with one segment along $\topfirst{c}{u_r}$ and one along the horizontal line through $v_i$. A drawing produced by this algorithm is illustrated in Fig.~\ref{fig:singleton}.

The fact that $\Gamma_k$ is a 1-bend planar drawing on $S$ satisfying Invariant I.\ref{i:v1-v2}--I.\ref{i:rays} can be shown as for the case in which $P_k$ is a chain. In particular, for Invariants I.\ref{i:stretch} and I.\ref{i:rays}, note that vertices $u_1,\dots,u_{\delta_i-2}$ do not have neighbors in $G \setminus G_k$ and do not belong to $C_k^-$. Thus, they do not need to have any free top ray incident to the outer face of $G_k^-$ and the edges connecting them to $v_i$ do not need to have a horizontal segment. This concludes our description for the case in which $P_k$ is a singleton, and yields the following theorem.

\begin{theorem}\label{th:triconnected}
For any $\Delta \geq 4$, there exists a equispaced universal set $S$ of $\Delta-1$ slopes for 1-bend planar drawings of triconnected planar graphs with maximum degree $\Delta$. 
Also, for any such graph on $n$ vertices, a 1-bend planar drawing on $S$ can be computed in $O(n)$ time.
\end{theorem}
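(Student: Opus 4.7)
The plan is to prove the theorem by induction on the index $k$ of the canonical order step, showing that at every step we maintain a 1-bend planar drawing $\Gamma_k$ of $G_k^-$ on $S$ satisfying Invariants~I.\ref{i:v1-v2}--I.\ref{i:rays}. The base case ($k=1$) follows immediately from the placement of $P_0$ and $P_1$ on a common horizontal line described above: Invariants~I.\ref{i:v1-v2} and I.\ref{i:stretch} are verified by direct inspection, while I.\ref{i:rays} uses that $|S|=\Delta-1$ provides $\Delta-2$ top rays per vertex, which is at least the number of neighbors of any of $v_1,v_2,v_3,\ldots,v_j$ later in the canonical order (the degree bound is $\Delta$ and two or three adjacencies have already been realized).

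For the inductive step, assuming $\Gamma_{k-1}$ satisfies the three invariants, I would execute the construction of $\Gamma_k$ exactly as described for the chain and singleton cases, and then verify each invariant in turn. Planarity of $\Gamma_k$ relies on Lemma~\ref{le:shooting-to-inf}, applied to every chosen top ray of a neighbor of $P_k$, to ensure that these rays reach the outer face without crossing, and on Lemma~\ref{le:stretching}, applied iteratively to edges of $C_{k-1}^-$, to align the intersection points $p_\ell,p_1,\dots,p_{\delta_i-2},p_r$ in the correct left-to-right order on the auxiliary horizontal line $h_i$ (as depicted in Figures~\ref{fig:intersections-before}--\ref{fig:intersections-after}), and then to shift suffixes of $C_{k-1}^-$ so that each $\rho_q$ coincides with $p_q$. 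Invariant~I.\ref{i:v1-v2} is preserved because the new vertices lie on or above $h_i$, which lies above the whole of $\Gamma_{k-1}$; Invariant~I.\ref{i:stretch} is preserved since every edge of $C_k^-$ introduced in this step contains either a segment along $h$ (chain case) or a horizontal segment through $v_i$ (singleton case); Invariant~I.\ref{i:rays} is preserved because we consume exactly one top ray at each of $u_\ell$ and $u_r$, while each internal neighbor $u_q$ no longer lies on $C_k^-$ and thus has no future neighbors to accommodate, and each new vertex of $P_k$ enters $C_k^-$ with all $\Delta-2$ of its top rays free and incident to the outer face. After processing $P_m$, edge $(v_1,v_2)$ is added as a two-segment polyline using the first clockwise bottom ray of $v_1$ and the first anticlockwise bottom ray of $v_2$, which cross because $|S|\geq 3$ and which introduce no crossing with the rest of the drawing by Invariant~I.\ref{i:v1-v2}.

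The main obstacle is the $O(n)$ running-time bound, since a naive execution in which Lemma~\ref{le:stretching} physically rewrites all affected $x$-coordinates each time it is invoked would cost $\Omega(n)$ per call and $\Omega(n^2)$ in total (one singleton of high degree alone triggers a linear number of stretch operations). I would address this by adopting the standard device used for the shifting method of De Fraysseix, Pach, and Pollack~\cite{DBLP:journals/combinatorica/FraysseixPP90} and Kant~\cite{DBLP:journals/algorithmica/Kant96}: maintain $x$-coordinates implicitly as offsets relative to parents in a left/right subtree structure over $C_k^-$, propagate all stretch offsets lazily at $O(1)$ cost per call, and resolve absolute coordinates by a single $O(n)$ traversal at the end. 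Since a canonical order can be computed in $O(n)$ time~\cite{DBLP:journals/algorithmica/Kant96} and each step $k$ performs $O(|P_k|)$ stretch, shoot, and placement operations, each executed at amortized $O(1)$ cost under this representation, the total running time is $O(n)$, which completes the proof.
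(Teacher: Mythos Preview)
Your proposal is correct and follows essentially the same approach as the paper: the correctness argument is exactly the induction over the canonical order that Section~\ref{sec:triconnected} develops, and the linear-time argument is the same offset-based shifting technique. The only point the paper makes more explicit is that Kant's shifting method is not \emph{directly} applicable here because the $y$-coordinates are not consecutive integers, and one needs the additional observation (attributed to Bekos et al.~\cite{DBLP:journals/jgaa/BekosG0015}) that $y$-coordinates never change after a vertex is placed in order to make the lazy offset bookkeeping go through; you should also note that a singleton step costs $O(\delta_i)$ stretch operations rather than $O(|P_k|)=O(1)$, though the total over all steps is still $O(n)$ since $\sum_i \delta_i = O(n)$.
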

\begin{proof}
Apply the algorithm described above to produce a 1-bend planar drawing of $G$ on $S$. The  correctness has been proved through out the section.
We now prove the time complexity. As already mentioned, computing the canonical order $\Pi$ of $G$ takes linear time~\cite{DBLP:journals/algorithmica/Kant96}. Hence, our algorithm can be easily implemented in quadratic time. In fact, when a chain is added, we apply Lemma~\ref{le:stretching} a constant number of times. For a singleton $v_i$ of degree $\delta_i \leq \Delta$, instead, we may apply this lemma $O(\delta_i)$ times. However, since $\sum_{i=1}^n \delta_i = O(n)$, the total number of applications of the lemma over all singletons is $O(n)$. The total quadratic time descends from the fact that a straightforward application of Lemma~\ref{le:stretching} may require linear time.
To improve the time complexity of our algorithm to linear we seek to use the shifting method of Kant~\cite{DBLP:conf/focs/Kant92}. However, as the $y$-coordinates of the vertices are not consecutive, this method is~not directly applicable. On the other hand, observe that the $y$-coordinates of the vertices that have been placed at some step of our algorithm do not change in later steps. As noted by Bekos et al.~\cite{DBLP:journals/jgaa/BekosG0015}, one can exploit this observation so to allow the usage of the shifting method (even in the case of non-consecutive $y$-coordinates) in order to perform all applications of Lemma~\ref{le:stretching} in total linear~time.
\end{proof}
\section{Biconnected Planar Graphs}
\label{sec:biconnected}

In this section we describe how to extend Theorem~\ref{th:triconnected} to biconnected planar graphs, using the SPQR-tree data structure described in Section~\ref{sec:preliminaries}.

The idea is to traverse the SPQR-tree of the input biconnected planar graph $G$ bottom-up and to construct for each visited node a drawing of its pertinent graph (except for its two poles) inside a rectangle, which we call \emph{chip}. Besides being a 1-bend planar drawing on $S$, this drawing must have an additional property, namely that it is possible to increase its width while changing neither its height nor the slope of any edge-segment. We call this property \emph{horizontal stretchability}. In the following, we give a formal definition of this drawing and describe how to compute it for each type of node of the SPQR-tree.

Let $\mathcal{T}$ be the SPQR-tree of $G$ rooted at an arbitrary Q-node $\rho$. Let $\mu$ be a node of $\mathcal{T}$ with poles $s_\mu$ and $t_\mu$. Let $G_\mu$ be the pertinent graph of $\mu$. Let $\reduced{\mu}$ be the graph obtained from $G_\mu$ as follows. First, remove edge $(s_\mu,t_\mu)$, if it exists; then, subdivide each edge incident to $s_\mu$ (to $t_\mu$) with a dummy vertex, which is a \emph{pin of $s_\mu$} (is a \emph{pin of $t_\mu$}); finally, remove $s_\mu$ and $t_\mu$, and their incident edges. Note that, if $\mu$ is a Q-node other than the root $\rho$, then $\reduced{\mu}$ is the empty graph. We denote by $\delta(s_\mu,\mu)$ and $\delta(t_\mu,\mu)$ the degree of $s_\mu$ and $t_\mu$ in $G_\mu$, respectively; note that the number of pins of $s_\mu$ (of $t_\mu$) is $\delta(s_\mu,\mu)-1$ (is $\delta(t_\mu,\mu)-1$), if edge $(s_\mu,t_\mu)$ exists in $G$, otherwise it is $\delta(s_\mu,\mu)$ (it is $\delta(t_\mu,\mu)$).

The goal is to construct a 1-bend planar drawing of $\reduced{\mu}$ on $S$ that lies inside an axis-aligned rectangle, called the \emph{chip} of $\mu$ and denoted by $C(\mu)$, so that the following invariant properties are satisfied (see Figure~\ref{fig:chip}):
\begin{enumerate}[{P.}1:]
\item \label{p:pins} All the pins of $s_\mu$ lie on the left side of $C(\mu)$, while all the pins of $t_\mu$ lie on its right side;
\item \label{p:edge} for each pin, the unique edge incident to it is horizontal; and
\item \label{p:side} there exist pins on the bottom-left and on the bottom-right corners of $C(\mu)$.
\end{enumerate}

We call \emph{horizontally-stretchable} (or \emph{\nice}, for short) a drawing of $\reduced{\mu}$ satisfying Properties P.\ref{p:pins}-P.\ref{p:side}. Note that a \nice drawing $\Gamma$ remains \nice after any uniform scaling, any translation, and any horizontal or vertical flip, since the horizontal slope is in $S$ and the slopes are equispaced. On the other hand, it is generally not possible to perform any non-uniform scaling of $\Gamma$ (in particular, a horizontal or a vertical scaling) without altering the slopes of some segments. However, we can simulate a horizontal scaling up of $\Gamma$ by elongating the horizontal segments incident to all the pins lying on the same vertical side of the chip, thus obtaining a new \nice drawing inside a new chip with the same height and a larger width. Conversely, a horizontal scaling down cannot always be simulated in this way.

Before giving the details of the algorithm, we describe a subroutine that we will often use to add the poles of a node $\mu$ to a \nice drawing of $\reduced{\mu}$ and draw the edges incident to them.

\begin{figure}[t]
	\centering 
    \begin{minipage}[b]{.37\textwidth}
        \centering
        \includegraphics[width=\textwidth,page=1]{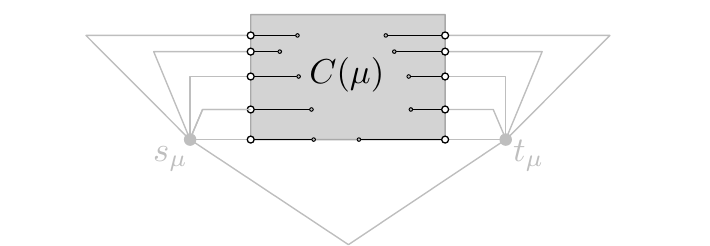}
        \subcaption{~}\label{fig:chip}{} 
    \end{minipage} 
	\hfil
	\begin{minipage}[b]{.37\textwidth}
		\centering 
		\includegraphics[width=\textwidth,page=4]{img/chip}
		\subcaption{~}\label{fig:draw-edges-pole}{}
	\end{minipage}
	\caption{Illustrations (a) of a pin $C(\mu)$ and (b) for Lemma~\ref{le:draw-edges-pole}.}
\end{figure}

\begin{lemma}\label{le:draw-edges-pole}
Let $u \in \{s_{\mu},t_{\mu}\}$ be a pole of a node $\mu \in \mathcal{T}$ and let $u_1,\dots,u_q$ be $q$ neighbors of $u$ in $\reduced{\mu}$. Consider a \nice drawing of $\reduced{\mu}$ inside a chip $C(\mu)$, whose pins $p_1,\dots,p_q$ correspond to $u_1,\dots,u_q$. Suppose that there exists a set of $q$ consecutive free rays of $u$ and that the elongation of the edge incident to each pin $p_1,\dots,p_q$ intersects all these rays. Then, it is possible to draw edges $(u,u_1),\dots,(u,u_q)$ with two straight-line segments whose slopes are in $S$, without introducing any crossing between two edges incident to $u$ or between an edge incident to $u$ and an edge of $\reduced{\mu}$.
\end{lemma}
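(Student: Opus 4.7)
The plan is to match each pin $p_i$ to one of the $q$ consecutive free rays of $u$, place the bend of edge $(u,u_i)$ at the intersection of this ray with the horizontal elongation of the pin's incident edge, and verify that no crossing is introduced. Without loss of generality, assume $u = s_\mu$, so the pins $p_1,\dots,p_q$ lie on the left side of the chip $C(\mu)$ at pairwise distinct $y$-coordinates by Property P.\ref{p:pins}; let $\ell_i$ denote the horizontal line through $p_i$. By Property P.\ref{p:edge}, the edge of $\reduced{\mu}$ incident to $p_i$ is horizontal, and its elongation to the left of $C(\mu)$ lies along $\ell_i$. By hypothesis, each $\ell_i$ meets all $q$ free consecutive rays of $u$, which I denote by $r_1,\dots,r_q$ in angular order around $u$.

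The central step is to fix the matching. I would sort the pins by $y$-coordinate, sort the $q$ rays by angle, and pair the $i$-th pin with the $i$-th ray, choosing the orientation consistent with the position of $u$ relative to $C(\mu)$ (i.e., whether the $y$-order of the pins as seen from $u$ agrees with or is reversed from the angular order of the rays). Call this bijection $\pi$ and set $b_i := r_{\pi(i)} \cap \ell_i$, which exists by the hypothesis on the elongations. I then draw each edge $(u,u_i)$ as the polyline from $u$ along $r_{\pi(i)}$ to $b_i$ and then along $\ell_i$ to $p_i$, absorbing the dummy vertex $p_i$ so that $(u,u_i)$ becomes a single edge with at most one bend (at $b_i$) and both segments having slopes in $S$.

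For planarity I would argue in two parts. First, no two edges $(u,u_i)$ and $(u,u_j)$ cross each other: their initial segments emanate from $u$ along distinct consecutive free rays, and so are separated at $u$; the order-preserving matching $\pi$ guarantees that the bend points on $\ell_i$ and $\ell_j$ respect the angular order around $u$, so the two polylines form a planar fan. Second, no edge $(u,u_i)$ crosses an edge of $\reduced{\mu}$: the initial segment lies along a \emph{free} ray of $u$, so it meets nothing before reaching $b_i$; the second segment lies on $\ell_i$ between $b_i$ and $p_i$ and is a subsegment of the horizontal elongation of the existing edge incident to $p_i$, which by assumption runs in the exterior region bounded by the $q$ rays and $C(\mu)$ without encountering other edges.

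The main obstacle will be verifying cleanly that the natural order-preserving matching actually yields a planar drawing. This reduces to a purely geometric statement about a bundle of rays emanating from a single point $u$ together with a set of parallel horizontal lines lying entirely on one side of $u$: within the angular wedge spanned by the $q$ consecutive rays, the angular order of the rays and the vertical order of their intersection points on any such line coincide up to a global reversal, and the wedge encloses the union of the polylines constructed above. The hypotheses that the $q$ rays are \emph{consecutive} and all \emph{free}, together with the fact that every elongation meets every one of them, are precisely what rule out any obstruction from pre-existing edges at $u$ and guarantee that the natural matching can be realized.
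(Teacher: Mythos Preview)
Your proposal is correct and follows essentially the same approach as the paper: match the pins to the consecutive rays in the natural order and place each bend at the intersection of the ray with the horizontal elongation of the corresponding pin edge. The paper's version is terser, phrasing the orientation choice as the observation that the elongation through one extremal pin separates $u$ from all the other pins; your explicit sort-and-match description and the paper's separation remark amount to the same assignment. One small wording issue: ``free ray'' only means no edge is attached at $u$ along it, not that the ray is unobstructed; the real reason the new segments avoid edges of $\reduced{\mu}$ is that the construction takes place entirely outside $C(\mu)$.
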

\begin{proof}
Refer to Fig.~\ref{fig:draw-edges-pole}. First note that, since $p_1,\dots,p_q$ are all on the same side of $C(\mu)$, the elongations of their incident edges intersect the $q$ free rays of $u$ in the same order; we name the rays as $r_1,\dots,r_q$ according to this order. Also note that, since the elongations of the edges incident to all the pins intersect all of $r_1,\dots,r_q$, the elongation of the edge incident to either $p_1$ or $p_q$ separates $u$ from all the other pins. We assume w.l.o.g.\ that the elongation of the edge incident to $p_1$ separates $u$ from $p_2,\dots,p_q$, as in Fig.~\ref{fig:draw-edges-pole}. We then place each pin $p_i$, with $1 \leq i \leq q$, on the intersection point between the elongation of its incident edge and $r_i$, and draw edge $(u,u_i)$ as a poly-line with a single bend at $p_i$. This procedure yields indeed a drawing satisfying the required properties by construction and by the fact that the drawing of $\reduced{\mu}$ is \nice. 
\end{proof}

We now describe the algorithm. At each step of the bottom-up traversal of $\mathcal{T}$, we consider a node $\mu \in \mathcal{T}$ with children $\nu_1, \dots, \nu_h$, and we construct a \nice drawing of $\reduced{\mu}$ inside a chip $C(\mu)$ starting from the \nice drawings of $\reduced{\nu_1}, \dots, \reduced{\nu_h}$ inside chips $C(\nu_1),\dots, C(\nu_h)$ that have been already constructed. In the following, we distinguish four different cases, according to which $\mu$ is a Q-, a P-, an S-, or an R-node.

\myparagraph{Suppose that $\mu$ is a Q-node}. If $\mu$ is not the root $\rho$ of $\mathcal{T}$, we do not do anything, since $\reduced{\mu}$ is the empty graph; edge $(s_\mu,t_\mu)$ of $G$ corresponding to $\mu$ will be drawn when visiting either the parent $\xi$ of $\mu$, if $\xi$ is not a P-node, or the parent of $\xi$.
%(which cannot be the root, since a P-node is adjacent to at most one Q-node).
In the case in which $\mu$ is indeed the root $\rho$ of $\mathcal{T}$, that is $\mu=\rho$, we observe that it has only one child $\nu_1$. Since $\reduced{\mu}$ coincides with $\reduced{\nu_1}$, the \nice drawing of $\reduced{\nu_1}$ is also a \nice drawing of $\reduced{\mu}$. Vertices $s_\mu$ and $t_\mu$, and their incident edges, will be added at the end of the traversal of $\mathcal{T}$.

\myparagraph{Suppose that $\mu$ is a P-node}; refer to Fig.~\ref{fig:p}. We consider a chip $C(\mu)$ for $\mu$ whose height is larger than the sum of the heights of chips $C(\nu_1), \dots, C(\nu_h)$ and whose width is larger than the one of any of $C(\nu_1), \dots, C(\nu_h)$. 
Then, we place chips $C(\nu_1), \dots, C(\nu_h)$ inside $C(\mu)$ so that no two chips overlap, their left sides lie along the left side of $C(\mu)$, and the bottom side of $C(\nu_h)$ lies along the bottom side of $C(\mu)$. Finally, we elongate the edges incident to all the pins on the right side of $C(\nu_1), \dots, C(\nu_h)$ till reaching the right side of $C(\mu)$. The resulting drawing is \nice since each of the drawings of $\reduced{\nu_1}, \dots, \reduced{\nu_h}$ is \nice. In particular, Property P.\ref{p:side} holds for $C(\mu)$ since it holds for $C(\nu_h)$.

\begin{figure}[t]
    \centering 
    \begin{minipage}[b]{.2\textwidth}
        \centering 
        \includegraphics[width=\textwidth,page=2]{img/chip}
        \subcaption{~}\label{fig:p}
    \end{minipage}
    \hfil
    \begin{minipage}[b]{.7\textwidth}
        \centering
        \includegraphics[width=\textwidth,page=2]{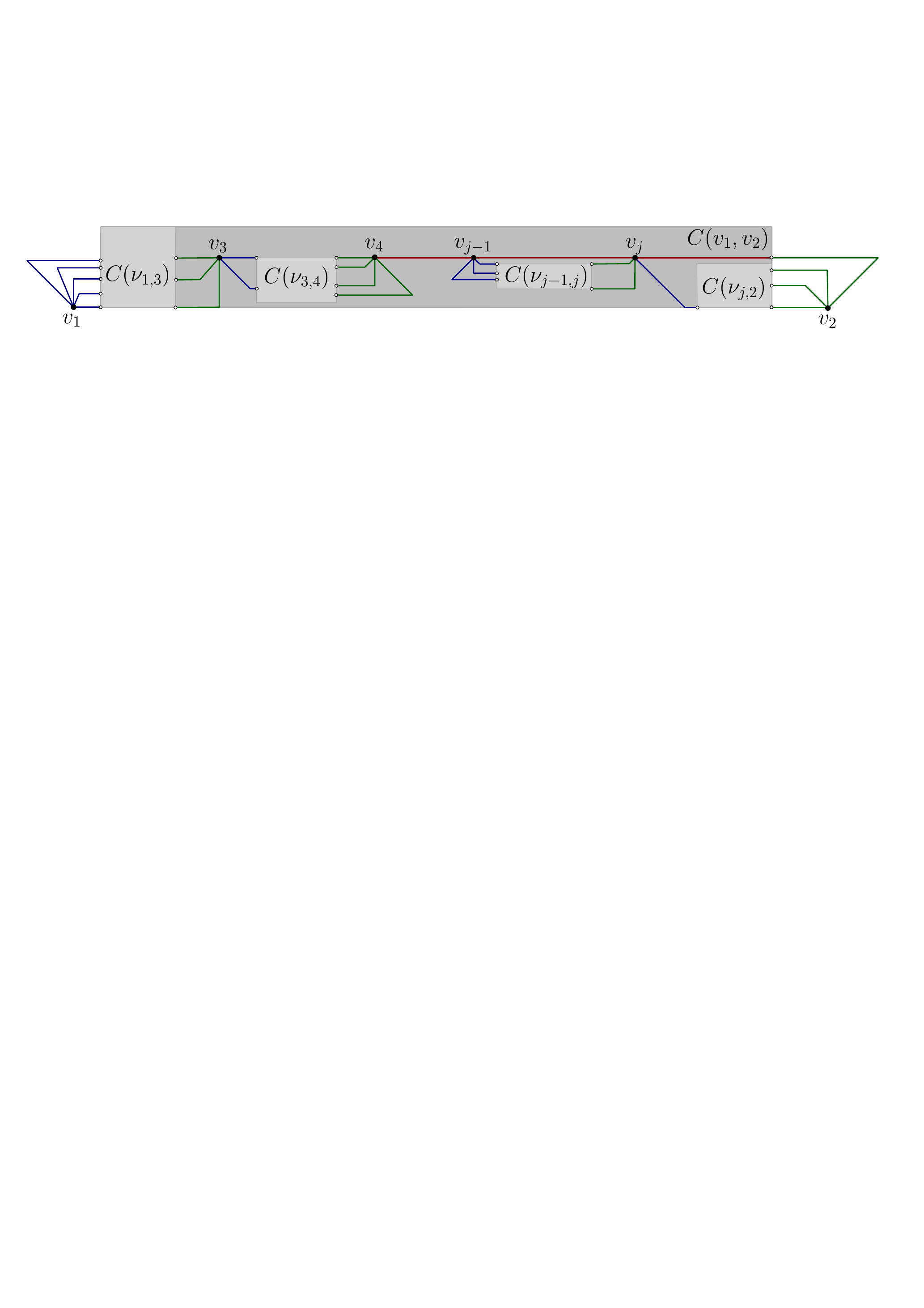}
        \subcaption{~}\label{fig:s}
    \end{minipage}
    \caption{Illustrations for the cases in which $\mu$ is: 
    (a) a $P$-node and (b) an $S$-node.}
\end{figure}

\myparagraph{Suppose that $\mu$ is an S-node}; refer to Fig.~\ref{fig:s}. Let $u_1, \dots, u_{h-1}$ be the internal vertices of the path of virtual edges between $s_\mu$ and $t_\mu$ that is obtained by removing the virtual edge $(s_\mu,t_\mu)$ from the skeleton of $\mu$. 
We construct a \nice drawing of $\reduced{\mu}$ as follows. 

First, we place vertices $u_1, \dots, u_{h-1}$ in this order along a horizontal line $l_\mu$. For $i=1, \dots, h-1$, let $\bottomfirst{a}{u_i}$ and $\bottomfirst{c}{u_i}$ be the first bottom rays of $u_i$ in anti-clockwise and in clockwise order, respectively. To place each chip $C(\nu_i)$, with $i=2,\dots, h-1$, we first flip it vertically, so that it has pins on its top-left and top-right corners, by Property~P.\ref{p:side}. After possibly scaling it down uniformly, we place it in such a way that its left side is to the right of $u_{i-1}$, its right side is to the left of $u_i$, it does not cross $\bottomfirst{c}{u_{i-1}}$ and $\bottomfirst{a}{u_i}$, and either its top side lies on line $l_\mu$ (if edge $(u_{i-1},u_{i}) \notin G$; see $C(\nu_2)$ in Fig.~\ref{fig:s}), or it lies slightly below it (otherwise; see $C(\nu_{h-1})$ in Fig.~\ref{fig:s}).

Then, we place $C(\nu_1)$ and $C(\nu_h)$, after possibly scaling them up uniformly, in such a way that:
\begin{inparaenum}[(i)]
	\item Chip $C(\nu_1)$ lies to the left of $u_1$ and does not cross $\bottomfirst{a}{u_1}$. Also, if $(s_\mu,u_1) \in G$, then $C(\nu_1)$ lies entirely below $l_\mu$; otherwise, as in Fig.~\ref{fig:s}, the topmost pin on its right side has the same $y$-coordinate as $u_1$.
	\item Chip $C(\nu_h)$ lies to the right of $u_h$ and does not cross $\bottomfirst{c}{u_h}$. Also, if $(u_h,t_\mu) \in G$, as in Fig.~\ref{fig:s}, then $C(\nu_h)$ lies entirely below $u_h$; otherwise, the topmost pin on its left side has the same $y$-coordinate as $u_h$.
	\item The bottom sides of $C(\nu_1)$ and of $C(\nu_h)$ have the same $y$-coordinate, which is smaller than the one of the bottom side of any other chip $C(\nu_2), \dots, C(\nu_{h-1})$.
\end{inparaenum}

We now draw all the edges incident to each vertex $u_i$, with $i=1,\dots,h-1$. 
If edge $(u_{i-1},u_i) \in G$, then it can be drawn as a horizontal segment, by construction. Otherwise, $u_i$ can be connected with a horizontal segment to its neighbor in $\reduced{\nu_i}$ corresponding to the topmost pin on the right side of $C(\nu_i)$. In both cases, one of these edges is attached at a horizontal ray of $u_i$. Analogously, one of the edges connecting $u_i$ to its neighbors in $\reduced{\nu_{i+1}} \cup \{u_{i+1}\}$ is attached at the other horizontal ray of $u_i$. Thus, it is possible to draw the remaining $\delta(u_i,\nu_i)+\delta(u_i,\nu_{i+1})-2 \leq \Delta-2$ edges incident to $u_i$ by attaching them at the $\Delta-2$ bottom rays of $u_i$, by applying Lemma~\ref{le:draw-edges-pole}. In fact, since $C(\nu_i)$ and $C(\nu_{i+1})$ lie to the left and to the right of $u_i$, respectively, and do not cross $\bottomfirst{c}{u_i}$ and $\bottomfirst{a}{u_i}$, the elongations of the edges incident to the pins of $u_i$ in $C(\nu_i)$ and in $C(\nu_{i+1})$ corresponding to these edges intersect all the bottom rays of $u_i$, hence satisfying the preconditions to apply the lemma.

Finally, we construct chip $C(\mu)$ as the smallest rectangle enclosing all the current drawing. Note that the left side of $C(\mu)$ contains the left side of $C(\nu_1)$, while the right side of $C(\mu)$ contains the right side of $C(\nu_h)$. Thus, all the pins of $s_\mu$, possibly except for the one corresponding to edge $(s_\mu,u_1)$, lie on the left side of $C(\mu)$. Also, if $(s_\mu,u_1)$ exists, we can add the corresponding pin since, by construction, $C(\nu_1)$ lies entirely below $u_1$. The same discussion applies for the pins of $t_\mu$. This proves that the constructed drawing satisfies properties P.\ref{p:pins} and P.\ref{p:edge}. To see that it also satisfies P.\ref{p:side}, note that the bottom side of $C(\mu)$ contains the bottom sides of $C(\nu_1)$ and of $C(\nu_h)$, by construction, which have a pin on both corners, by Property~P.\ref{p:side}. Thus, the constructed drawing of $\reduced{\mu}$ is \nice.

\myparagraph{Suppose that $\mu$ is an R-node}. We compute a \nice drawing of $\reduced{\mu}$ as follows. First, we compute a 1-bend planar drawing on $S$ of the whole pertinent $G_\mu$ of $\mu$, including its poles $s_\mu$ and $t_\mu$; then, we remove the poles of $\mu$ and their incident edges, we define chip $C(\mu)$, and we place the pins on its two vertical sides so to satisfy Properties P.\ref{p:pins}--P.\ref{p:side}.

In order to compute the drawing of $G_\mu$, we exploit the fact that the skeleton $\skel{\mu}$ of $\mu$ is triconnected. Hence, we can use the algorithm described in Section~\ref{sec:triconnected} as a main tool for drawing $G_\mu$, with suitable modifications to take into account the fact that each virtual edge $(u,v)$ of $\skel{\mu}$ actually corresponds to a whole subgraph, namely the pertinent graph $G_\nu$ of the child $\nu$ of $\mu$ with poles $s_\nu=u$ and $t_\nu=v$. Thus, when the virtual edge $(u,v)$ is considered, we have to add the \nice drawing of $\reduced{\nu}$ inside a chip $C(\nu)$; this enforces additional requirements for our drawing algorithm. 

The first obvious requirement is that $(u,v)$ will occupy $\delta(u,\nu)$ consecutive rays of $u$ and $\delta(v,\nu)$ consecutive rays of $v$, and not just a single ray for each of them, as in the triconnected case. However, reserving the correct amount of rays of $u$ and $v$ is not always sufficient to add $C(\nu)$ and to draw the edges between $u$, $v$, and vertices in $\reduced{\nu}$. In fact, we need to ensure that there exists a placement for $C(\nu)$ such that the elongations of the edges incident to its pins intersect all the reserved rays of the poles $u$ and $v$ of $\nu$, hence satisfying the preconditions to apply Lemma~\ref{le:draw-edges-pole}. In a high-level description, for the virtual edges that would be drawn with a horizontal segment in the triconnected case (all the edges of a chain, and the first and last edges of a singleton), this can be done by using a construction similar to the one of the case in which $\mu$ is an S-node. For the edges that do not have any horizontal segment (the internal edges of a singleton), instead, we need a more complicated construction.

We now describe the algorithm, which is again based on considering the vertices of $H=G_\mu$ according to a canonical order $\Pi = (P_0,\ldots,P_m)$ of $H$, in which $v_1=s_\mu$ and $v_2=t_\mu$, and on constructing a 1-bend planar drawing of $H_k^-$ on $S$ satisfying a modified version of Invariants~I.\ref{i:v1-v2}--I.\ref{i:rays}. 

\begin{enumerate}[{M.}1]
	\item \label{m:v1-v2} No part of the drawing lies below vertices $v_1$ and $v_2$, which have the same $y$-coordinate.
	\item \label{m:stretch} For every virtual edge $(w,z)$ on $C_k^-$, if $(w,z)$ belongs to $H$ then it has a horizontal segment; also, the edge-segments corresponding to edges incident to the pins of the chip of the child of $\mu$ corresponding to $(w,z)$ are horizontal.
	\item \label{m:rays} Each vertex $v$ on $C_k^-$ has at least as many free top rays incident to the outer face of $H_k^-$ as the number of its neighbors in $H$ that have not been drawn yet.
\end{enumerate}

We note that Invariant~M.\ref{m:v1-v2} is identical to Invariant~I.\ref{i:v1-v2}, while Invariant~M.\ref{m:rays} is the natural extension of Invariant~I.\ref{i:rays} to take into account our previous observation. Finally, Invariant~M.\ref{m:stretch} corresponds to Invariant~I.\ref{i:stretch}, as it ensures that we can still apply Lemma~\ref{le:stretching} and Lemma~\ref{le:shooting-to-inf}. 

At the first step, we draw $P_0=\{v_1,v_2\}$ and $P_1=\{v_3,\ldots,v_j\}$. Consider the path of virtual edges $(v_1,v_3),(v_3,v_4),\dots,(v_j,v_2)$. Let $\nu_{1,3}, \nu_{3,4}, \dots,\nu_{j,2}$ be the corresponding children of $\mu$, and let $C(\nu_{1,3}), C(\nu_{3,4}), \dots,C(\nu_{j,2})$ be their chips. 
We consider this path as the skeleton of an S-node with poles $v_1$ and $v_2$, and we we apply the same algorithm as in the case in which $\mu$ is an S-node to draw the subgraph composed of $v_3,\ldots,v_j$ and of chips $C(\nu_{1,3}), C(\nu_{3,4}),\dots,C(\nu_{j,2})$ inside a larger chip, denoted by $C(v_1,v_2)$. Note that, by construction, $C(v_1,v_2)$ has pins on its bottom-left and on its bottom-right corners. We then place $v_1$ and $v_2$ with the same $y$-coordinate as the bottom side of $C(v_1,v_2)$, with $v_1$ to the left and $v_2$ to the right of $C(v_1,v_2)$. We draw one of the edges incident to $v_1$ horizontal, and the remaining $\delta(v_1,\nu_{1,3})-1$ by applying Lemma~\ref{le:draw-edges-pole}, and the same for $v_2$.
Invariants~M.\ref{m:v1-v2} and M.\ref{m:stretch} are satisfied by construction. For Invariant~M.\ref{m:rays}, note that $v_3,\ldots,v_j$ have all their $\Delta-2$ top rays free, by construction, and at least two of their neighbors have already been drawn. Also, $v_1$ and $v_2$ have consumed only $\delta(v_1,\nu_{1,3})-1$ and $\delta(v_2,\nu_{j,2})-1$ top rays, respectively. Since edge $(v_1,v_2)$ does not belong to $H_k^-$ (but belongs to $H$), $v_1$ and $v_2$ satisfy Invariant~M.\ref{m:rays}.

We now describe how to add path $P_k$, for some $k > 1$, to the current drawing $\Gamma_{k-1}$ in the two cases in which $P_k$ is a chain or a singleton.

Suppose that $P_k$ is a chain, say $\{v_i,v_{i+1},\ldots,v_j\}$; let $u_\ell$ and $u_r$ be the neighbors of $v_i$ and $v_j$ in $C_k^-$. Let $\nu_\ell,\nu_i,\dots,\nu_{j-1},\nu_r$ be the children of $\mu$ corresponding to virtual edges $(u_\ell,v_i),(v_i,v_{i+1}),$ $\dots,$ $(v_{j-1},v_j),(v_j,v_r)$, and let $C(\nu_\ell),C(\nu_i),\dots,C(\nu_{j-1}),C(\nu_r)$ be their chips. 

We define rays $\topfirst{a}{u_\ell}$ and $\topfirst{c}{u_r}$, and the horizontal segment $h$ between them, as in the triconnected case. Due to Lemma~\ref{le:shooting-to-inf}, we can assume that $\topfirst{a}{u_\ell}$ and the $\delta(u_\ell,\nu_\ell)-1$ top rays of $u_\ell$ following it in anti-clockwise order do not cross any edge of $\Gamma_{k-1}$, and the same for $\topfirst{c}{u_r}$ and the $\delta(u_r,\nu_r)-1$ top rays of $u_r$ following it in clockwise order. Note that, by Invariant~M.\ref{m:rays}, all these rays are free. As in the step in which we considered $P_0$ and $P_1$ of $\Pi$, we use the algorithm for the case in which $\mu$ is an S-node to construct a drawing of the subgraph composed of $v_i,\ldots,v_j$ and of chips $C(\nu_\ell), C(\nu_i), \dots,C(\nu_{j-1}),C(\nu_r)$ inside a larger chip $C(u_\ell,u_r)$, which has pins on its bottom-left and on its bottom-right corners. We then place $C(u_\ell,u_r)$ so that its bottom side lies on $h$ and it does not cross $\topfirst{a}{u_\ell}$ and $\topfirst{c}{u_r}$, after possibly scaling it down uniformly. Finally, we draw the $\delta(u_\ell,\nu_\ell)$ edges between $u_\ell$ and its neighbors in $\reduced{\nu_\ell} \cup \{v_i\}$, and the $\delta(u_r,\nu_r)$ edges between $u_r$ and its neighbors in $\reduced{\nu_r} \cup \{v_j\}$, by applying Lemma~\ref{le:draw-edges-pole}, whose preconditions are satisfied. The fact that the constructed drawing satisfies the three invariants can be proved as in the previous case.

Suppose finally that $P_k$ is a singleton, say $\{v_i\}$, of degree $\delta_i \leq \Delta$ in $H_{k}^-$. As in the triconnected case, we shall assume that $\delta_i \geq 3$. Let $u_\ell,u_1,\ldots,u_{\delta_i-2},u_r$ be the neighbors of $v_i$ as they appear along $C_{k-1}^-$, let $\nu_\ell, \nu_1,\dots,\nu_{\delta_i-2},\nu_r$ be the children of $\mu$ corresponding to the virtual edges connecting $v_i$ with these vertices, and let $C(\nu_\ell), C(\nu_1),\dots,C(\nu_{\delta_i-2}),C(\nu_r)$ be their chips.

For each $q=1,\dots,\delta_i-2$, we select any set $T_q$ of consecutive $\delta(u_q,\nu_q)$ free top rays of $u_q$ incident to the outer face and a set $B_q$ of consecutive $\delta(v_i,\nu_q)$ bottom rays of $v_i$; see Fig.~\ref{fig:sing2}. Sets $B_1,\dots,B_{\delta_i-2}$ are selected in such a way that all the rays in $B_q$ precede all the rays in $B_{q+1}$ in anti-clockwise order. Since $\delta(v_i,\nu_\ell)+\delta(v_i,\nu_r) \geq 2$, vertex $v_i$ has enough bottom rays for sets $B_1,\dots,B_{\delta_i-2}$. We also define sets $T_\ell$ and $T_r$ as composed of the first $\delta(u_\ell,\nu_\ell)$ free top rays of $u_\ell$ in anti-clockwise order and of the first $\delta(u_r,\nu_r)$ free top rays of $u_r$ in clockwise order, respectively.

We then select a horizontal line $h_i$ lying above every vertex in $\Gamma_{k-1}$. As in the algorithm described in Section~\ref{sec:triconnected}, after possibly applying $O(\Delta)$ times Lemma~\ref{le:stretching}, we can assume that all the rays in sets $T_\ell,T_1,\dots,T_{\delta_i-2},T_r$ intersect $h_i$ in the correct order. Namely, when moving along $h_i$ from left to right, we encounter all the intersections with the rays in $T_\ell$, then all those with the rays in $T_1$, and so on. On the other hand, this property is already guaranteed for the rays in $B_1,\dots,B_{\delta_i-2}$. This defines two total left-to-right orders $\mathcal{O}_T$ and $\mathcal{O}_B$ of the intersection points of $T_\ell,T_1,\dots,T_{\delta_i-2},T_r$ and of $B_1,\dots,B_{\delta_i-2}$ along $h_i$, respectively. To simplify the description, we extend these orders to the rays in $T_\ell,T_1,\dots,T_{\delta_i-2},T_r$ and in $B_1,\dots,B_{\delta_i-2}$, respectively.

Our goal is to merge the two sets of intersection points, while respecting $\mathcal{O}_T$ and $\mathcal{O}_B$, in such a way that the following condition holds for each $q=1,\dots,\delta_i-2$. If edge $(v_i,u_q)$ belongs to $H$, then the first intersection point of $T_q$ in $\mathcal{O}_T$ coincides with the first intersection point of $B_q$ in $\mathcal{O}_B$, and the second intersection point of $T_q$ in $\mathcal{O}_T$ is to the right of the last intersection point of $B_q$ in $\mathcal{O}_B$; see $T_1$ and $B_1$ in Fig.~\ref{fig:sing2}. Otherwise, $(v_i,u_q) \notin H$ and the first intersection point of $T_q$ in $\mathcal{O}_T$ is to the right of the last intersection point of $B_q$ in $\mathcal{O}_B$; see $T_3$ and $B_3$ in Fig.~\ref{fig:sing2}. In both cases, the intersection points of $T_q$ and $B_q$ are to the left of those of $T_{q+1}$ and $B_{q+1}$.

To obtain this goal, we perform a procedure analogous to the one described in Section~\ref{sec:triconnected} to make points $p_1,\dots,p_{\delta_i-2}$ coincide with points $\rho_1,\dots,\rho_{\delta_i-2}$. Namely, we consider a line $\lambda_1$, whose slope is the one of the first ray in $B_1$, that starts at the first intersection point of $T_1$ in $\mathcal{O}_T$, if edge $(v_1,u_1)$ belongs to $H$, or at any point between the last intersection point of $T_1$ and the first intersection point of $T_2$ in $\mathcal{O}_T$, otherwise. Then, we place $v_i$ along $\lambda_1$, far enough from $h_i$ so that the distance between any two consecutive intersection points in $\mathcal{O}_B$ is larger than the distance between the first and the last intersection points in $\mathcal{O}_T$; see Fig.~\ref{fig:sing1}. Finally, we apply Lemma~\ref{le:stretching} at most $\delta_i-3$ times to move the intersection points of sets $T_2,\dots,T_{\delta_i-2}$, one by one, in the their correct positions; see Fig.~\ref{fig:sing2}.

\begin{figure}[t]
    \centering 
    \begin{minipage}[b]{.35\textwidth}
    	\begin{minipage}[b]{\textwidth}
    		\includegraphics[width=\textwidth,page=3]{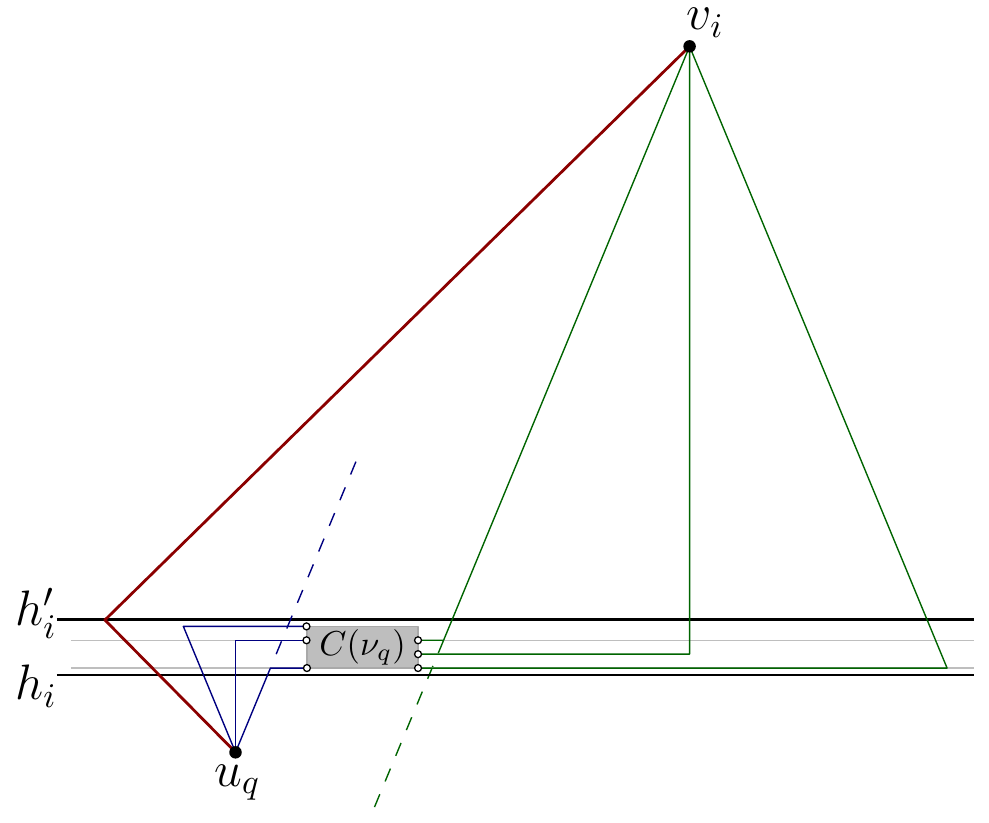}
    		\subcaption{~}\label{fig:sing1}
    	\end{minipage}    	
    	\begin{minipage}[b]{\textwidth}
    		\includegraphics[width=\textwidth,page=4]{img/biconnected}
    		\subcaption{~}\label{fig:sing2}
    	\end{minipage}    	    	
    \end{minipage}
    \hfil
    \begin{minipage}[b]{.48\textwidth}
        \includegraphics[width=.9\textwidth,page=1]{img/biconnected}
        \subcaption{~}\label{fig:vedge}
    \end{minipage}
    \caption{Ilustrations for placing singleton $v_i$ in the case of an R-node.}
    \label{fig:spampanato}
\end{figure}

Once the required ordering of intersection points along $h_i$ has been obtained, we consider another horizontal line $h'_i$ lying above $h_i$ and close enough to it so that its intersections with the rays in $T_\ell,T_1,\dots,T_{\delta_i-2},T_r$ and  $B_1,\dots,B_{\delta_i-2}$ appear along it in the same order as along $h_i$. We place each chip $C(\nu_q)$, with $q=1,\dots,\delta_i-2$, after possibly scaling it down uniformly, in the interior of the region delimited by these two lines, by the last ray in $T_q$, and by a ray in $B_q$ (either the second or the first, depending on whether $(v_i,u_q) \in H$ or not), so that its top side is horizontal; see Fig.~\ref{fig:vedge}.

We draw the edges incident to $v_i$ and $u_q$, for each $q=1,\dots,\delta_i-2$, as follows. If edge $(v_i,u_q)$ belongs to $H$, we draw it with one segment along the first ray in $T_q$ and one along the first ray in $B_q$ (see the red edge in Fig.~\ref{fig:vedge}). For the other edges we apply Lemma~\ref{le:draw-edges-pole} twice, whose preconditions are satisfied due to the placement of $C(\nu_q)$ (see the blue and green edges in Fig.~\ref{fig:vedge}). 

We conclude by drawing the edges connecting $v_i$, $u_\ell$, and vertices in $\reduced{\nu_\ell}$; the edges connecting $v_i$, $u_r$ and vertices in $\reduced{\nu_r}$ are drawn symmetrically. First, after possibly applying Lemma~\ref{le:stretching}, we assume that the last ray of $T_\ell$ intersects the horizontal line through $v_i$ to the left of $v_i$, at a point $p_i$. 
After possibly scaling $C(\nu_\ell)$ down uniformly, we place it so that its left side is to the right of $p_i$, its right side is to the left of $v_i$, it does not cross the first top ray of $v_i$ in clockwise order, and its bottom side is horizontal and lies either above the horizontal line through $v_i$, if edge $(u_\ell,v_i)$ belongs to $H$, or along it, otherwise. Then, we draw $(u_\ell,v_i)$, if it belongs to $H$, with one segment along the last ray of $T_\ell$ and the other one along the horizontal line through $v_i$. Otherwise, edge $(u_\ell,v_i)$ does not belong to $H$ and we can draw one of the edges incident to $v_i$ with a horizontal segment. We finally apply Lemma~\ref{le:draw-edges-pole} twice, to draw the edges from $u_\ell$ to its neighbors in $\reduced{\nu_\ell}$, and from $v_i$ to its other neighbors in $\reduced{\nu_\ell}$. The fact that the constructed drawing satisfies Invariants~M.\ref{m:v1-v2}--M.\ref{m:rays} can be proved as in the triconnected case.

Once the last path $P_m$ of $\Pi$ has been added, we have a drawing $\Gamma_\mu$ of $H=G_\mu$ satisfying Invariants~M.\ref{m:v1-v2}--M.\ref{m:rays}. We construct chip $C(\mu)$ as the smallest axis-aligned rectangle enclosing $\Gamma_\mu$. By Invariant~M.\ref{m:v1-v2}, vertices $v_1$ and $v_2$ lie on the bottom side of $C(\mu)$. Also, by Invariant~M.\ref{m:stretch}, all the edges incident to $v_1$ or to $v_2$ have a horizontal segment. Thus, it is possible to obtain a drawing of $\reduced{\mu}$ inside $C(\mu)$ by removing $v_1$ and $v_2$ (and their incident edges) from $\Gamma_\mu$, by elongating the horizontal segments incident to them till reaching the vertical sides of $C(\mu)$, and by placing pins at their ends. The fact that this drawing satisfies Properties~P.\ref{p:pins}--P.\ref{p:side} follows from the observation that $v_1$ and $v_2$ were on the bottom side of $C(\mu)$. This concludes the case in which $\mu$ is an R-node.

Once we have visited the root $\rho$ of $\mathcal{T}$, we have a \nice drawing of $\reduced{\rho}$ inside a chip $C(\rho)$, which we extend to a drawing of $G$ as follows. Refer to Fig.~\ref{fig:chip}. 
We place $s_\rho$ and $t_\rho$ at the same $y$-coordinate as the bottom side of $C(\rho)$, one to its left and one to its right, so that $C(\rho)$ does not cross any of the rays of $s_\rho$ and of $t_\rho$. Then, we draw edge $(s_\rho,t_\rho)$ with one segment along the first bottom ray in clockwise order of $s_\rho$ and the other one along the first bottom ray in anti-clockwise order of $t_\rho$. Also, we draw the edges connecting $s_\rho$ and $t_\rho$ to the vertices corresponding to the lowest pins on the two vertical sides of $C(\rho)$ as horizontal segments. Finally, we draw all the remaining edges incident to $s_\rho$ and $t_\rho$ by applying Lemma~\ref{le:draw-edges-pole} twice.
The following theorem summarizes the discussion in this section.

\begin{theorem}\label{th:biconnected}
For any $\Delta \geq 4$, there exists a equispaced universal set $S$ of $\Delta-1$ slopes for 1-bend planar drawings of biconnected planar graphs with maximum degree $\Delta$. 
Also, for any such graph on $n$ vertices, a 1-bend planar drawing on $S$ can be computed in $O(n)$ time.
\end{theorem}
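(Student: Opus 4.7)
}
The plan is to establish correctness by structural induction over the SPQR-tree $\mathcal{T}$ (rooted at an arbitrary Q-node $\rho$), processed bottom-up exactly as in the algorithmic description. For every non-root node $\mu \in \mathcal{T}$ the inductive hypothesis asserts that we have already produced a \nice 1-bend planar drawing on $S$ of $\reduced{\mu}$ inside a chip $C(\mu)$ satisfying Properties~P.\ref{p:pins}--P.\ref{p:side}. The base case is straightforward: for a Q-node other than $\rho$, $\reduced{\mu}$ is empty and there is nothing to verify. For the inductive step I would argue one node type at a time, reusing the constructions already given in the section.

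For the \emph{P-node} case, horizontal stretchability of every child chip immediately yields planarity of the stacking inside $C(\mu)$; Property~P.\ref{p:side} holds for $C(\mu)$ because it holds for $C(\nu_h)$, whose bottom side coincides with that of $C(\mu)$. For the \emph{S-node} case the key points to check are that the uniform scalings of the children chips can be realized so that the geometric conditions (clearance from the oblique rays $\bottomfirst{a}{u_i}, \bottomfirst{c}{u_i}$, correct $y$-alignment with the poles of each child) are simultaneously satisfied, and that Lemma~\ref{le:draw-edges-pole} applies to each internal vertex $u_i$ to route its $\delta(u_i,\nu_i)+\delta(u_i,\nu_{i+1})-2 \leq \Delta-2$ non-horizontal edges on its $\Delta-2$ bottom rays. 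The verification is mechanical once we observe that uniform scalings preserve niceness because $S$ is equispaced and contains the horizontal slope. For the \emph{R-node} case I would verify that the modified invariants M.\ref{m:v1-v2}--M.\ref{m:rays} are preserved by each step of the inner canonical-order loop, by reusing the triconnected analysis of Section~\ref{sec:triconnected} with the following two additions: (i) each virtual edge $(u,v)$ reserves $\delta(u,\nu)$ and $\delta(v,\nu)$ consecutive rays of $u$ and $v$ (rather than a single ray), and (ii) the chip $C(\nu)$ is placed in the region cut out by the chosen pairs of extremal rays and two auxiliary horizontal lines $h_i,h'_i$, so that the hypotheses of Lemma~\ref{le:draw-edges-pole} hold for both $u$ and $v$. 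Invariant~M.\ref{m:stretch} guarantees that Lemmas~\ref{le:stretching} and~\ref{le:shooting-to-inf} still apply, so that the merging of the top-ray order $\mathcal{O}_T$ and the bottom-ray order $\mathcal{O}_B$ along $h_i$ can be forced by $O(\Delta)$ stretch operations per singleton, exactly as in the triconnected proof.

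After the root $\rho$ is processed, the plan is to attach $s_\rho$ and $t_\rho$ on the bottom side of $C(\rho)$ using the explicit construction described at the end of the section and apply Lemma~\ref{le:draw-edges-pole} twice. Invariant~M.\ref{m:v1-v2} (translated to $C(\rho)$) ensures that edge $(s_\rho,t_\rho)$, drawn using the two innermost bottom rays, does not cross anything. The angular resolution bound is inherited from the fact that $S$ is equispaced with $|S|=\Delta-1$ and every edge-segment uses a slope in $S$, so the minimum angle between consecutive segments at a vertex or a bend is at least $\pi/(\Delta-1)$.

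For the time complexity, the plan is to mimic the argument of Theorem~\ref{th:triconnected}. A straightforward implementation is polynomial because each invocation of Lemma~\ref{le:stretching} can take linear time. To obtain linear total time I would amortize the stretch operations via Kant's shifting method~\cite{DBLP:conf/focs/Kant92}, as already done in Theorem~\ref{th:triconnected} and adapted in~\cite{DBLP:journals/jgaa/BekosG0015} to the non-consecutive $y$-coordinate setting; the critical observation is again that the $y$-coordinate of a vertex is fixed as soon as it is placed. The chip stackings in the P- and S-node cases and the bottom-up scan of $\mathcal{T}$ cost $O(|\mathcal{T}|)=O(n)$ in total. The R-node case contributes $O(|H|)$ per R-node for the canonical-order scan plus $O(\sum_i \delta_i) = O(|H|)$ stretch operations, amortized to linear via the shifting method; summing over all R-nodes yields $O(n)$ overall. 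The main obstacle in the whole proof is a careful bookkeeping of the R-node case, where the merging of the two orders along $h_i$ and the chip placements for all $\delta_i-2$ children must be realized simultaneously without conflict: this is where the argument is most delicate, but the geometric picture given by Figs.~\ref{fig:sing1}--\ref{fig:vedge} already indicates that the required stretches can be applied one at a time from left to right, which is what the proof would formalize.
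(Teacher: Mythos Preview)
Your correctness argument is essentially the paper's own: a bottom-up structural induction over the SPQR-tree, case-split by node type, with the R-node case re-running the canonical-order construction of Section~\ref{sec:triconnected} under the modified invariants M.\ref{m:v1-v2}--M.\ref{m:rays}. The paper simply says ``the correctness has been proved throughout the section'' and you have spelled out exactly that induction, so on this front there is nothing to add.

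Your running-time argument, however, has a gap. You account for the stretch operations inside R-nodes via Kant's shifting method, which is correct and matches Theorem~\ref{th:triconnected}. But you do not account for the \emph{uniform scalings} of child chips that the S-node and R-node constructions perform (``after possibly scaling it down uniformly''). If you maintain explicit vertex coordinates during the bottom-up pass, then each such scaling touches every vertex inside the chip, and a chip can be rescaled once at every ancestor; this alone can push the total cost to $\Theta(n^2)$ on a deep SPQR-tree. The paper sidesteps this by a two-pass scheme you do not mention: during the bottom-up traversal it stores for each chip only the coordinates of two opposite corners (so a scaling or placement of a child chip is $O(1)$), and it assigns actual vertex coordinates in a separate top-down pass once all nested scalings and translations are known. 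With that device each node of $\mathcal{T}$ costs time linear in the size of its skeleton, and since the skeleton sizes sum to $O(n)$ the total is linear. Your sentence ``the chip stackings in the P- and S-node cases and the bottom-up scan of $\mathcal{T}$ cost $O(|\mathcal{T}|)=O(n)$'' is the place where this justification is missing.
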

\begin{proof}
Apply the algorithm described above to produce a 1-bend planar drawing of $G$ on $S$. The correctness has been proved through out the section.
For the time complexity, first observe that the SPQR-tree $\mathcal{T}$ of $G$ can be computed in linear time~\cite{DBLP:conf/gd/GutwengerM00}. Also, for each node $\mu \in \mathcal{T}$, we can compute a \nice drawing of $\reduced{\mu}$ in time linear in the size of $\skel{\mu}$ assuming that, for each chip, we only store the coordinates of two opposite corners. Final coordinates can then be assigned by traversing the SPQR-tree top-down. Also, notice that for R-nodes, a drawing of the skeleton can be obtained in linear time by Theorem~\ref{th:triconnected}. Since the total size over all the skeletons of the nodes of $\mathcal{T}$ is linear in the size of $G$, our algorithm is linear. 
\end{proof}
%=============================================================
\section{General Planar Graphs}
\label{sec:general}
%=============================================================

Let $G$ be a connected planar graph of maximum degree $\Delta$ and let $\mathcal{B}$ be its BC-tree. We traverse $\mathcal{B}$ bottom-up; at each step, we consider a $B$-node $\beta$, whose parent in $\mathcal{B}$ is the $C$-node $\gamma$. We exploit Theorem~\ref{th:biconnected} to compute a 1-bend planar drawing $\Gamma(\beta)$ of $\beta$ on the slope-set $S$ with $\Delta-1$ equispaced slopes, assuming that the root of the SPQR-tree of $\beta$ corresponds to an edge incident to $\gamma$.  Consider any vertex $c$ of $\beta$ different from $\gamma$ that is a cut-vertex in $G$, and let $\delta_c$ be the degree of $c$ in $G \setminus \beta$. The construction satisfies the following two invariants.

\begin{enumerate}[{K.}1]
\item \label{it:k1} There exists a set of $\delta_c$ consecutive rays of $c$ that are not used in $\Gamma(\beta)$.

\item \label{it:k2} The edges incident to $\gamma$ use a set of consecutive rays in $\Gamma(\beta)$.
\end{enumerate}

Note that K.\ref{it:k2} is already satisfied by the algorithm of Theorem~\ref{th:biconnected}. For K.\ref{it:k1}, we slightly modify this algorithm. The modified algorithm still guarantees K.\ref{it:k2}. Namely, when vertex $c$ is considered in the bottom-up traversal of the SPQR-tree of $\beta$, we reserve $\delta_c<\Delta$ consecutive rays around $c$.

For each $C$-node $c$ that is a child of $\beta$ in $\mathcal{B}$, consider all its children $\zeta_1, \dots, \zeta_q$, with $q \leq \Delta-2$. By Invariants K.\ref{it:k1} and K.\ref{it:k2}, each of these blocks has been drawn so that $c$ is one of its poles (and therefore drawn on its outer face) and its incident edges use a set of consecutive rays. Since the sum of the degrees of $c$ in these blocks is equal to $\delta_c$, we can insert these drawings into $\Gamma(\beta)$ using the $\delta_c$ free rays of $c$. After possibly scaling the drawings of $\zeta_1, \dots, \zeta_q$ down uniformly and by rotating them appropriately, we can guarantee that these insertions do not introduce any crossings between them or with edges of $\Gamma(\beta)$.

Since the visit of $\mathcal{B}$ can be done in linear time and since a drawing of a disconnected graph can be obtained by drawing each connected component independently, the proof of Theorem~\ref{th:result}~follows.
%=============================================================
\section{Conclusions and Open Problems}
\label{sec:conclusions}
%=============================================================

In this paper, we improved the best-known upper bound of Knauer and Walczak~\cite{DBLP:conf/latin/KnauerW16} on the 1-bend planar slope number from $\frac{3}{2}(\Delta - 1)$ to $\Delta-1$, for $\Delta \geq 4$. Two side-results of our work are the following. Since the angular resolution of our drawings is at least $\frac{\pi}{\Delta - 1}$, at the cost of increased drawing area our main result also improves the best-known upper bound of $\frac{\pi}{4\Delta}$ on the angular resolution of 1-bend poly-line planar drawings by Duncan and Kobourov~\cite{DBLP:journals/jgaa/DuncanK03}. For $\Delta=4$, it also guarantees that planar graphs with maximum degree $4$ admit 1-bend planar drawings on a set of slopes $\{0,\frac{\pi}{3},\frac{2\pi}{3}\}$, while previously it was known that such graphs can be embedded with one bend per edge on a set of slopes $\{0,\frac{\pi}{4},\frac{\pi}{2},\frac{3\pi}{4}\}$~\cite{DBLP:journals/jgaa/BekosG0015} and with two bends per edge on a set of slopes $\{0,\pi\}$~\cite{DBLP:journals/comgeo/BiedlK98}. 

Our work raises several open problems. 
\begin{inparaenum}[(i)]
\item Reduce the gap between the $\frac{3}{4}(\Delta-1)$ lower bound and the $\Delta-1$ upper bound on the 1-bend planar slope number. 
\item Our algorithm produces drawings with large (possibly super-polynomial) area. Is this unavoidable for 1-bend planar drawings with few slopes and good angular resolution? 
\item Study the straight-line case (e.g., for degree-4 graphs). Note that the stretching operation might be difficult in this setting.
\item We proved that a set of $\Delta -1$ equispaced slopes is universal for 1-bend planar drawings. Is \emph{every} set of $\Delta -1$ slopes universal? Note that for $\Delta \leq 4$ a positive answer descends from our work and from a result by Dujmovic et al.~\cite{DBLP:journals/comgeo/DujmovicESW07}, who proved that any planar graph that can be drawn on a particular set of three slopes can also be drawn on any set of three slopes. If the answer to this question is negative for $\Delta > 4$, what is the minimum value $s(\Delta)$ such that every set of $s(\Delta)$ slopes is universal?
\end{inparaenum}

\paragraph{Acknowledgements.}
This work  started at the 19$^{th}$ {\em Korean Workshop on Computational Geometry}. We wish to thank the organizers and the participants for creating a pleasant and stimulating atmosphere and in particular Fabian Lipp and Boris Klemz for useful discussions.

\bibliographystyle{abbrv}
\bibliography{references}

\begin{thebibliography}{10}

\bibitem{DBLP:journals/combinatorics/BaratMW06}
J.~Bar{\'{a}}t, J.~Matou\v{s}ek, and D.~R. Wood.
\newblock Bounded-degree graphs have arbitrarily large geometric thickness.
\newblock {\em Electr. J. Comb.}, 13(1), 2006.

\bibitem{DBLP:journals/jgaa/BekosG0015}
M.~A. Bekos, M.~Gronemann, M.~Kaufmann, and R.~Krug.
\newblock Planar octilinear drawings with one bend per edge.
\newblock {\em J. Graph Algorithms Appl.}, 19(2):657--680, 2015.

\bibitem{DBLP:conf/latin/Bekos0016}
M.~A. Bekos, M.~Kaufmann, and R.~Krug.
\newblock On the total number of bends for planar octilinear drawings.
\newblock In {\em {LATIN}}, volume 9644 of {\em LNCS}, pages 152--163.
  Springer, 2016.

\bibitem{DBLP:journals/comgeo/BiedlK98}
T.~C. Biedl and G.~Kant.
\newblock A better heuristic for orthogonal graph drawings.
\newblock {\em Comput. Geom.}, 9(3):159--180, 1998.

\bibitem{DBLP:journals/algorithmica/BlasiusKRW14}
T.~Bl{\"{a}}sius, M.~Krug, I.~Rutter, and D.~Wagner.
\newblock Orthogonal graph drawing with flexibility constraints.
\newblock {\em Algorithmica}, 68(4):859--885, 2014.

\bibitem{DBLP:journals/comgeo/BlasiusLR16}
T.~Bl{\"{a}}sius, S.~Lehmann, and I.~Rutter.
\newblock Orthogonal graph drawing with inflexible edges.
\newblock {\em Comput. Geom.}, 55:26--40, 2016.

\bibitem{DBLP:journals/jgaa/BodlaenderT04}
H.~L. Bodlaender and G.~Tel.
\newblock A note on rectilinearity and angular resolution.
\newblock {\em J. Graph Algorithms Appl.}, 8:89--94, 2004.

\bibitem{DBLP:conf/wg/BonichonSM02}
N.~Bonichon, B.~L. Sa{\"{e}}c, and M.~Mosbah.
\newblock Optimal area algorithm for planar polyline drawings.
\newblock In {\em {WG}}, volume 2573 of {\em LNCS}, pages 35--46. Springer,
  2002.

\bibitem{DBLP:journals/cpc/FraysseixMR94}
H.~de~Fraysseix, P.~O. de~Mendez, and P.~Rosenstiehl.
\newblock On triangle contact graphs.
\newblock {\em Combinatorics, Probability {\&} Computing}, 3:233--246, 1994.

\bibitem{DBLP:journals/combinatorica/FraysseixPP90}
H.~de~Fraysseix, J.~Pach, and R.~Pollack.
\newblock How to draw a planar graph on a grid.
\newblock {\em Combinatorica}, 10(1):41--51, 1990.

\bibitem{DBLP:books/ph/BattistaETT99}
G.~{Di Battista}, P.~Eades, R.~Tamassia, and I.~G. Tollis.
\newblock {\em Graph Drawing: Algorithms for the Visualization of Graphs}.
\newblock Prentice-Hall, 1999.

\bibitem{DBLP:conf/latin/GiacomoLM14}
E.~{Di Giacomo}, G.~Liotta, and F.~Montecchiani.
\newblock The planar slope number of subcubic graphs.
\newblock In {\em {LATIN}}, volume 8392 of {\em LNCS}, pages 132--143.
  Springer, 2014.

\bibitem{DBLP:journals/jgaa/GiacomoLM15}
E.~{Di Giacomo}, G.~Liotta, and F.~Montecchiani.
\newblock Drawing outer 1-planar graphs with few slopes.
\newblock {\em J. Graph Algorithms Appl.}, 19(2):707--741, 2015.

\bibitem{DBLP:journals/comgeo/DujmovicESW07}
V.~Dujmovi\'c, D.~Eppstein, M.~Suderman, and D.~R. Wood.
\newblock Drawings of planar graphs with few slopes and segments.
\newblock {\em Comput. Geom.}, 38(3):194--212, 2007.

\bibitem{DBLP:journals/dcg/DuncanEGKN13}
C.~A. Duncan, D.~Eppstein, M.~T. Goodrich, S.~G. Kobourov, and
  M.~N{\"{o}}llenburg.
\newblock Drawing trees with perfect angular resolution and polynomial area.
\newblock {\em Discrete {\&} Computational Geometry}, 49(2):157--182, 2013.

\bibitem{DBLP:journals/jgaa/DuncanK03}
C.~A. Duncan and S.~G. Kobourov.
\newblock Polar coordinate drawing of planar graphs with good angular
  resolution.
\newblock {\em J. Graph Algorithms Appl.}, 7(4):311--333, 2003.

\bibitem{DBLP:conf/gd/DurocherM14}
S.~Durocher and D.~Mondal.
\newblock Trade-offs in planar polyline drawings.
\newblock In {\em {GD}}, volume 8871 of {\em LNCS}, pages 306--318. Springer,
  2014.

\bibitem{DBLP:journals/siamcomp/FormannHHKLSWW93}
M.~Formann, T.~Hagerup, J.~Haralambides, M.~Kaufmann, F.~T. Leighton,
  A.~Symvonis, E.~Welzl, and G.~J. Woeginger.
\newblock Drawing graphs in the plane with high resolution.
\newblock {\em {SIAM} J. Comput.}, 22(5):1035--1052, 1993.

\bibitem{DBLP:conf/esa/GargT94}
A.~Garg and R.~Tamassia.
\newblock Planar drawings and angular resolution: Algorithms and bounds
  (extended abstract).
\newblock In {\em {ESA}}, volume 855 of {\em LNCS}, pages 12--23. Springer,
  1994.

\bibitem{DBLP:journals/siamcomp/GargT01}
A.~Garg and R.~Tamassia.
\newblock On the computational complexity of upward and rectilinear planarity
  testing.
\newblock {\em {SIAM} J. Comput.}, 31(2):601--625, 2001.

\bibitem{DBLP:conf/gd/GutwengerM98}
C.~Gutwenger and P.~Mutzel.
\newblock Planar polyline drawings with good angular resolution.
\newblock In {\em {GD}}, volume 1547 of {\em LNCS}, pages 167--182. Springer,
  1998.

\bibitem{DBLP:conf/gd/GutwengerM00}
C.~Gutwenger and P.~Mutzel.
\newblock A linear time implementation of {SPQR}-trees.
\newblock In {\em {GD}}, volume 1984 of {\em LNCS}, pages 77--90. Springer,
  2000.

\bibitem{DBLP:journals/jgaa/Hoffmann17}
U.~Hoffmann.
\newblock On the complexity of the planar slope number problem.
\newblock {\em J. Graph Algorithms Appl.}, 21(2):183--193, 2017.

\bibitem{DBLP:journals/gc/JelinekJKLTV13}
V.~Jel{\'{\i}}nek, E.~Jel{\'{\i}}nkov{\'{a}}, J.~Kratochv{\'{\i}}l,
  B.~Lidick{\'{y}}, M.~Tesar, and T.~Vyskocil.
\newblock The planar slope number of planar partial 3-trees of bounded degree.
\newblock {\em Graphs and Comb.}, 29(4):981--1005, 2013.

\bibitem{DBLP:books/sp/Juenger04}
M.~J{\"{u}}nger and P.~Mutzel, editors.
\newblock {\em Graph Drawing Software}.
\newblock Springer, 2004.

\bibitem{DBLP:conf/focs/Kant92}
G.~Kant.
\newblock Drawing planar graphs using the lmc-ordering (extended abstract).
\newblock In {\em FOCS}, pages 101--110. {IEEE} Computer Society, 1992.

\bibitem{DBLP:conf/wg/Kant92}
G.~Kant.
\newblock Hexagonal grid drawings.
\newblock In {\em {WG}}, volume 657 of {\em LNCS}, pages 263--276. Springer,
  1992.

\bibitem{DBLP:journals/algorithmica/Kant96}
G.~Kant.
\newblock Drawing planar graphs using the canonical ordering.
\newblock {\em Algorithmica}, 16(1):4--32, 1996.

\bibitem{DBLP:journals/siamdm/KeszeghPP13}
B.~Keszegh, J.~Pach, and D.~P{\'{a}}lv{\"{o}}lgyi.
\newblock Drawing planar graphs of bounded degree with few slopes.
\newblock {\em {SIAM} J. Discrete Math.}, 27(2):1171--1183, 2013.

\bibitem{DBLP:conf/latin/KnauerW16}
K.~Knauer and B.~Walczak.
\newblock Graph drawings with one bend and few slopes.
\newblock In {\em {LATIN}}, volume 9644 of {\em LNCS}, pages 549--561.
  Springer, 2016.

\bibitem{DBLP:journals/comgeo/KnauerMW14}
K.~B. Knauer, P.~Micek, and B.~Walczak.
\newblock Outerplanar graph drawings with few slopes.
\newblock {\em Comput. Geom.}, 47(5):614--624, 2014.

\bibitem{DBLP:conf/gd/LenhartLMN13}
W.~Lenhart, G.~Liotta, D.~Mondal, and R.~I. Nishat.
\newblock Planar and plane slope number of partial 2-trees.
\newblock In {\em {GD}}, volume 8242 of {\em LNCS}, pages 412--423. Springer,
  2013.

\bibitem{DBLP:journals/dam/LiuMS98}
Y.~Liu, A.~Morgana, and B.~Simeone.
\newblock A linear algorithm for 2-bend embeddings of planar graphs in the
  two-dimensional grid.
\newblock {\em Discrete Applied Mathematics}, 81(1-3):69--91, 1998.

\bibitem{DBLP:conf/gd/MukkamalaP11}
P.~Mukkamala and D.~P{\'{a}}lv{\"{o}}lgyi.
\newblock Drawing cubic graphs with the four basic slopes.
\newblock In {\em {GD}}, volume 7034 of {\em LNCS}, pages 254--265. Springer,
  2011.

\bibitem{Noellenburg05}
M.~N\"ollenburg.
\newblock Automated drawings of metro maps.
\newblock Technical Report 2005-25, Fakult\"at f\"ur Informatik, Universit\"at
  Karlsruhe, 2005.

\bibitem{DBLP:journals/tvcg/NollenburgW11}
M.~N{\"{o}}llenburg and A.~Wolff.
\newblock Drawing and labeling high-quality metro maps by mixed-integer
  programming.
\newblock {\em {IEEE} Trans. Vis. Comput. Graph.}, 17(5):626--641, 2011.

\bibitem{DBLP:journals/tvcg/StottRMW11}
J.~M. Stott, P.~Rodgers, J.~C. Martinez{-}Ovando, and S.~G. Walker.
\newblock Automatic metro map layout using multicriteria optimization.
\newblock {\em {IEEE} Trans. Vis. Comput. Graph.}, 17(1):101--114, 2011.

\bibitem{DBLP:journals/siamcomp/Tamassia87}
R.~Tamassia.
\newblock On embedding a graph in the grid with the minimum number of bends.
\newblock {\em {SIAM} J. Comput.}, 16(3):421--444, 1987.

\bibitem{DBLP:reference/crc/2013gd}
R.~Tamassia, editor.
\newblock {\em Handbook on Graph Drawing and Visualization}.
\newblock Chapman and Hall, 2013.

\bibitem{DBLP:journals/cj/WadeC94}
G.~A. Wade and J.~Chu.
\newblock Drawability of complete graphs using a minimal slope set.
\newblock {\em Comput. J.}, 37(2):139--142, 1994.

\end{thebibliography}

\end{document}